\begin{document}
\begin{frontmatter}

\title{Further Evidence Towards the Multiplicative 1-2-3 Conjecture}

\author[nice]{Julien Bensmail}
\author[labri]{Herv\'e Hocquard}
\author[labri]{Dimitri Lajou}
\author[labri]{\'Eric Sopena}

\address[nice]{Universit\'e C\^ote d'Azur, CNRS, Inria, I3S, France}
\address[labri]{Univ. Bordeaux, CNRS,  Bordeaux INP, LaBRI, UMR 5800, F-33400, Talence, France}

\journal{...}

\begin{abstract}
The product version of the 1-2-3 Conjecture, introduced by Skowronek-Kazi\'ow in 2012, states that, a few obvious exceptions apart, all graphs can be $3$-edge-labelled so that no two adjacent vertices get incident to the same product of labels. To date, this conjecture was mainly verified for complete graphs and $3$-colourable graphs. As a strong support to the conjecture, it was also proved that all graphs admit such $4$-labellings.

In this work, we investigate how a recent proof of the multiset version of the 1-2-3 Conjecture by Vu\v{c}kovi\'c can be adapted to prove results on the product version. We prove that $4$-chromatic graphs verify the product version of the 1-2-3 Conjecture. We also prove that for all graphs we can design $3$-labellings that almost have the desired property. This leads to a new problem, that we solve for some graph classes.
\end{abstract}

\begin{keyword} 
1-2-3 Conjecture; multiset version; product version; $4$-chromatic graphs.
\end{keyword}
 
\end{frontmatter}

\newtheorem{theorem}{Theorem}[section]
\newtheorem{lemma}[theorem]{Lemma}
\newtheorem{conjecture}[theorem]{Conjecture}
\newtheorem{observation}[theorem]{Observation}
\newtheorem{claim}[theorem]{Claim}
\newtheorem{corollary}[theorem]{Corollary}
\newtheorem{proposition}[theorem]{Proposition}
\newtheorem{question}[theorem]{Question}
\newtheorem*{123c}{1-2-3 Conjecture (sum version)}
\newtheorem*{m123c}{1-2-3 Conjecture (multiset version)}
\newtheorem*{p123c}{1-2-3 Conjecture (product version)}

\newcommand{\qedclaim}{\hfill $\diamond$ \medskip}
\newenvironment{proofclaim}{\noindent{\em Proof of the claim.}}{\qedclaim}

\newcommand{\chis}{\chi_{\rm S}}
\newcommand{\chim}{\chi_{\rm M}}
\newcommand{\chip}{\chi_{\rm P}}

%%%%%%%%%%%%%%%%%%%%%%%%%%%%%%%%%%%%%%%%%%%%%%%%%%%%%%%%%%%%%%%%%%%%
%%%%%%%%%%%%%%%%%%%%%%%%%%%%%%%%%%%%%%%%%%%%%%%%%%%%%%%%%%%%%%%%%%%%
%%%%%%%%%%%%%%%%%%%%%%%%%%%%%%%%%%%%%%%%%%%%%%%%%%%%%%%%%%%%%%%%%%%%
%%%%%%%%%%%%%%%%%%%%%%%%%%%%%%%%%%%%%%%%%%%%%%%%%%%%%%%%%%%%%%%%%%%%
%%%%%%%%%%%%%%%%%%%%%%%%%%%%%%%%%%%%%%%%%%%%%%%%%%%%%%%%%%%%%%%%%%%%
%%%%%%%%%%%%%%%%%%%%%%%%%%%%%%%%%%%%%%%%%%%%%%%%%%%%%%%%%%%%%%%%%%%%
%%%%%%%%%%%%%%%%%%%%%%%%%%%%%%%%%%%%%%%%%%%%%%%%%%%%%%%%%%%%%%%%%%%%
%%%%%%%%%%%%%%%%%%%%%%%%%%%%%%%%%%%%%%%%%%%%%%%%%%%%%%%%%%%%%%%%%%%%
%%%%%%%%%%%%%%%%%%%%%%%%%%%%%%%%%%%%%%%%%%%%%%%%%%%%%%%%%%%%%%%%%%%%
%%%%%%%%%%%%%%%%%%%%%%%%%%%%%%%%%%%%%%%%%%%%%%%%%%%%%%%%%%%%%%%%%%%%
%%%%%%%%%%%%%%%%%%%%%%%%%%%%%%%%%%%%%%%%%%%%%%%%%%%%%%%%%%%%%%%%%%%%
%%%%%%%%%%%%%%%%%%%%%%%%%%%%%%%%%%%%%%%%%%%%%%%%%%%%%%%%%%%%%%%%%%%%
%%%%%%%%%%%%%%%%%%%%%%%%%%%%%%%%%%%%%%%%%%%%%%%%%%%%%%%%%%%%%%%%%%%%
%%%%%%%%%%%%%%%%%%%%%%%%%%%%%%%%%%%%%%%%%%%%%%%%%%%%%%%%%%%%%%%%%%%%
%%%%%%%%%%%%%%%%%%%%%%%%%%%%%%%%%%%%%%%%%%%%%%%%%%%%%%%%%%%%%%%%%%%%
%%%%%%%%%%%%%%%%%%%%%%%%%%%%%%%%%%%%%%%%%%%%%%%%%%%%%%%%%%%%%%%%%%%%
%%%%%%%%%%%%%%%%%%%%%%%%%%%%%%%%%%%%%%%%%%%%%%%%%%%%%%%%%%%%%%%%%%%%
%%%%%%%%%%%%%%%%%%%%%%%%%%%%%%%%%%%%%%%%%%%%%%%%%%%%%%%%%%%%%%%%%%%%
%%%%%%%%%%%%%%%%%%%%%%%%%%%%%%%%%%%%%%%%%%%%%%%%%%%%%%%%%%%%%%%%%%%%
%%%%%%%%%%%%%%%%%%%%%%%%%%%%%%%%%%%%%%%%%%%%%%%%%%%%%%%%%%%%%%%%%%%%

\section{Introduction}

This work takes place in the general context of \textbf{distinguishing labellings}, where the aim, given an undirected graph, is to label its edges so that its adjacent vertices get distinguished by some function computed from the labelling. Formally, a \textit{$k$-labelling} $\ell: E(G) \rightarrow \{1,\dots,k\}$ of a graph $G$ assigns a label from $\{1,\dots,k\}$ to each edge, and, for every vertex $v$, we can compute some function $f(v)$ of the labels assigned to the edges incident to $v$. The goal is then to design $\ell$ so that $f(u) \neq f(v)$ for every edge $uv$ of $G$.
As reported in a survey~\cite{Gal98} by Gallian on the topic, there actually exist dozens and dozens types of distinguishing labelling notions, which all have their own particular behaviours and subtleties.

\medskip

We are here more particularly interested in the so-called \textbf{1-2-3 Conjecture}, which is defined through the following notions. Given a labelling $\ell$ of a graph $G$, we can compute for every vertex $v$ its sum $\sigma_\ell(v)$ of incident labels, being formally $\sigma_\ell(v)=\Sigma_{u \in N(v)} \ell(uv)$. We say that $\ell$ is \textit{s-proper} if the so-obtained $\sigma_\ell$ yields a proper vertex-colouring of $G$, i.e., $\sigma_\ell(u) \neq \sigma_\ell(v)$ for every edge $uv$. Generally speaking, not only we aim at finding s-proper $k$-labellings of $G$, but also we aim at designing such ones having $k$ as small as possible. Thus, for $G$, we are interested in determining $\chis(G)$, which is the smallest $k \geq 1$ such that s-proper $k$-labellings of $G$ do exist.

Greedy arguments show that there exists only one connected graph $G$ for which $\chis(G)$ is not defined, and that graph is $K_2$. This implies that $\chis(G)$ is defined for every graph $G$ with no component isomorphic to $K_2$, which we call a \textit{nice graph}. It is then legitimate to wonder how large can $\chis(G)$ be in general, for a nice graph $G$. Karo\'nski, {\L}uczak and Thomason conjectured that this value cannot exceed~$3$ in general~\cite{KLT04}:

\begin{123c}
If $G$ is a nice graph, then $\chis(G) \leq 3$.
\end{123c}

One could naturally wonder about slight modifications of the 1-2-3 Conjecture, where the aim would be to design labellings $\ell$ distinguishing adjacent vertices accordingly to a function $f$ that is somewhat close to the sum function $\sigma_\ell$. There actually exist at least two such variants, to be described in what follows, which sound particularly interesting due to their respective subtleties, to some behaviours they share with the original 1-2-3 Conjecture, and to general existing connections with that conjecture.

\begin{itemize}
    \item The first such variant we consider is the one where adjacent vertices of a graph $G$ are required, by a labelling $\ell$, to be distinguished by their \textit{multisets} of incident labels. Recall that a multiset is a set in which elements can be repeated. For a vertex $v$ of $G$, we denote by $\mu_\ell(v)$ the multiset of labels assigned to the edges incident to $v$. We say that $\ell$ is \textit{m-proper} if $\mu_\ell$ is a proper vertex-colouring of $G$, while we denote by $\chim(G)$ the least $k \geq 1$ such that $G$ admits m-proper $k$-labellings (if any).
    
    \item The second such variant is the one where adjacent vertices of $G$ must be, by $\ell$, distinguished accordingly to the \textit{products} of their incident labels. Formally, for a vertex $v$ of $G$, we define $\rho_\ell(v)$ as the product of labels assigned to the edges incident to $v$. We say that $\ell$ is \textit{p-proper} if $\rho_\ell$ is a proper vertex-colouring of $G$. We denote by $\chip(G)$ the smallest $k \geq 1$ such that $G$ admits p-proper $k$-labellings (if any).
\end{itemize}

There exist several interesting connections between the previous three series of notions. For instance, it can be easily noted that an s-proper or p-proper labelling is always m-proper. As a result, $\chim(G) \leq \min \{\chis(G),\chip(G)\}$ holds for every graph $G$ for which the parameters are defined (see below). In general, there is no other systematic relationship between these three notions, though some exist in particular contexts. For instance, s-proper $2$-labellings, m-proper $2$-labellings and p-proper $2$-labellings are equivalent notions in regular graphs~\cite{BBDHPSW19}. It can also be noted that s-proper $\{0,1\}$-labellings and p-proper $\{1,2\}$-labellings are equivalent notions~\cite{Lyn18}. Another illustration is that an m-proper $k$-labelling yields a p-proper $\{l_1,\dots,l_k\}$-labelling, for any set $\{l_1,\dots,l_k\}$ of $k$ pairwise coprime integers.

Just as for the 1-2-3 Conjecture, one can wonder how large can $\chim(G)$ and $\chip(G)$ be for a given graph $G$. Before providing hints on that very question, let us first mention that, similarly as for s-proper labellings, the only connected graph admitting no m-proper labellings and no p-proper labellings is $K_2$. Thus, the notion of nice graph coincides for the three types of proper labellings. It actually turns out that the straight analogue of the 1-2-3 Conjecture is believed to hold for m-proper labellings and p-proper labellings; namely:

\begin{m123c}
If $G$ is a nice graph, then $\chim(G) \leq 3$.
\end{m123c}

\begin{p123c}
If $G$ is a nice graph, then $\chip(G) \leq 3$.
\end{p123c}

The multiset version of the 1-2-3 Conjecture was introduced by Addario-Berry, Aldred,  Dalal and Reed in~\cite{AADR05}, while the product version was introduced by Skowronek-Kazi\'ow in~\cite{SK12}.
By an argument above, recall that the sum version and the product version of the 1-2-3 Conjecture, if true, would actually imply the multiset version. From that angle, the multiset version does appear, at least intuitively, as the most feasible out of the three versions. This is reinforced by unique behaviours of m-proper labellings over s-proper labellings and p-proper labellings. In particular, by a labelling $\ell$ of a graph, in order to have $\mu_\ell(u)=\mu_\ell(v)$ for any two vertices $u$ and $v$, note that we must have $d(u)=d(v)$.

The most notable facts towards these three variants of the 1-2-3 Conjecture to date are:

\begin{itemize}
    \item Regarding the sum version, the best result to date, proved by Kalkowski, Karo\'nski and Pfender in~\cite{KKP10}, is that $\chis(G) \leq 5$ holds for every nice graph $G$. The conjecture was verified for all $3$-colourable graphs~\cite{KLT04}. Regarding $4$-chromatic graphs, the conjecture was verified for $4$-edge-connected ones~\cite{WZZ17}. In~\cite{Prz19c}, it was recently shown that $\chis(G) \leq 4$ holds for every nice regular graph $G$.
    
    \item Regarding the multiset version, for long the best result, proved by Addario-Berry, Aldred,  Dalal and Reed in~\cite{AADR05}, was that $\chim(G) \leq 4$ holds for every nice graph $G$. A few years ago, a breakthrough result was obtained by Vu\v{c}kovi\'c in~\cite{Vuc18}, in which he gave a full proof of the conjecture.
    %\todo{Question: Ca serait pas mieux de le mettre sous forme de theoreme (cf ce que j'ai ajouté) ? Ca casse un peu le flot de l'itemize mais le résultat serait mieux mis en avant. Je sais pas trop si c'est une bonne idée}
    \end{itemize}
    
    \begin{theorem}[\cite{Vuc18}]\label{theorem:vuckovic}
    If $G$ is a nice graph, then $\chim(G) \leq 3$.
    \end{theorem}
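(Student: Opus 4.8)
The plan is to label the edges of $G$ with $\{1,2,3\}$ by sweeping through the vertices in a carefully chosen order, finalizing labels as we go while keeping enough slack to resolve the conflicts that only reveal themselves at the very end. Throughout, I would lean on two structural simplifications. First, since $\mu_\ell(u)=\mu_\ell(v)$ forces $d(u)=d(v)$, only same-degree adjacent pairs can ever clash, so it suffices to separate those. Second, for a vertex $v$ of degree $d$ all of whose incident labels lie in $\{1,2,3\}$, the multiset $\mu_\ell(v)$ is completely determined by the pair $\big(\sigma_\ell(v),\,n_1(v)\big)$, where $n_1(v)$ is the number of incident edges labelled $1$: indeed $n_1+n_2+n_3=d$ and $n_1+2n_2+3n_3=\sigma_\ell(v)$ leave exactly one degree of freedom. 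Thus each vertex carries a two-dimensional ``signature'', two equal-degree neighbours conflict precisely when their signatures coincide, and it is this extra dimension, absent from the sum version, that one hopes makes $3$ labels enough here.

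Next I would reduce to $G$ connected with at least three vertices, dispatch a few small or degenerate cases by hand, and fix a spanning structure — a rooted spanning tree, or better, an ordering derived from an ear decomposition of each $2$-connected block — that lets me list the vertices $v_1,\dots,v_n$ so that each $v_i$ other than the root has a designated ``parent edge'' to an earlier vertex, with the remaining (non-tree) edges suitably controlled. Processing $v_1,\dots,v_n$ in turn, upon reaching $v_i$ I would finalize every still-undetermined edge incident to $v_i$ — in particular its edges to later vertices — choosing them so as to push the signature of $v_i$ away from those of its already-finalized back-neighbours. With up to three choices per free edge, and a large ``lever'' on the parent edge, the set of signatures reachable for $v_i$ should comfortably dodge the at most $d(v_i)-1$ forbidden ones, using the degree restriction to keep the forbidden list short.

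The delicate point — the reason this is a genuine breakthrough rather than a routine greedy argument — is the endgame: a naive sweep collapses because the last vertex, and more generally the second endpoint of each cycle-closing edge, may have all its incident edges already forced, leaving no room to repair a clash. The heart of the proof must therefore be an invariant maintained along the sweep ensuring that every already-processed vertex still has (at least) two distinct signatures available, realizable by flipping a not-yet-committed choice, and that these residual choices on different vertices can be exercised independently enough that closing a cycle never produces an unavoidable conflict. I expect that setting up this invariant — deciding exactly which edges to keep ``half-committed'', proving the two-signature freedom survives every step, and checking the wrap-around edges — will be the main obstacle; once it is in place, the per-step work (counting reachable signatures against forbidden ones) should be essentially bookkeeping.
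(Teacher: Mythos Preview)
This theorem is not proved in the paper; it is quoted from Vu\v{c}kovi\'c~\cite{Vuc18} as an external result, so there is no in-paper proof to compare against. What the paper does contain are adaptations of Vu\v{c}kovi\'c's method (Theorems~\ref{theorem:4chromatic} and~\ref{theorem:star}), and from those one can infer the shape of the original argument: it is organised around a proper $k$-vertex-colouring $(V_1,\dots,V_k)$, processes the colour classes from $V_k$ down to $V_1$, and labels each vertex's \emph{upward} edges so that vertices in different classes land in provably disjoint product/multiset ``types'' (controlled parities and values of $2$- and $3$-degrees), with a separate clean-up for the residual conflicts between $V_1$ and $V_2$. That is structurally quite different from your spanning-tree/ear-decomposition sweep.

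More to the point, what you have written is not a proof but a programme, and you say so yourself: the entire difficulty is packed into the sentence ``the heart of the proof must therefore be an invariant\dots I expect that setting up this invariant\dots will be the main obstacle.'' Everything before that is standard set-up (the $(\sigma,n_1)$ encoding, the degree restriction on conflicts), and the greedy step you describe --- choose the free edges at $v_i$ to dodge at most $d(v_i)-1$ forbidden signatures --- is exactly the kind of argument that is well known to break at cycle-closing edges, as you acknowledge. Absent a concrete invariant and a verification that it survives each step and each ear closure, there is no proof here. If you want to pursue this line, you would need to specify precisely which edge at each processed vertex remains ``half-committed'', prove that flipping it moves that vertex's signature without disturbing any earlier resolved conflict, and show these flips are independent enough to handle every back-edge simultaneously; none of that is present, and there is no indication it can be made to work with only three labels.
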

    
    \begin{itemize}
    \item Regarding the product version, the best results to date were mainly obtained via adaptations of arguments used to provide results towards the sum and multiset versions. Specifically, Skowronek-Kazi\'ow proved in~\cite{SK12} that $\chip(G) \leq 4$ holds for all nice graphs $G$. In the same article, she proved the product version of the 1-2-3 Conjecture for $3$-colourable graphs.
\end{itemize}

In this work, we provide several results towards the product version of the 1-2-3 Conjecture. A first (minor) reason for focusing on this version is that it is, out of the three versions, the least investigated one to date. A second (major) reason stems from the recent proof of Theorem~\ref{theorem:vuckovic} by Vu\v{c}kovi\'c. As pointed out earlier, m-proper labellings and p-proper labellings tend to have alike behaviours, which gives us  hope that the proof of Vu\v{c}kovi\'c might be a step towards proving the product version of the problem.

Let us support this perspective further. 
It is first important to mention that the labels $1,2,3$ by a $3$-labelling  are very special in terms of vertex products. Note in particular that label~$1$ has a  unique behaviour, since assigning label~$1$ to an edge $uv$ by a labelling $\ell$ impacts neither $\rho_\ell(u)$ nor $\rho_\ell(v)$. It is important, however, to emphasise that assigning label~$1$ to $uv$ is not similar to deleting $uv$ from the graph, as, though $\ell(uv)$ does not contribute to $\rho_\ell(u)$ and $\rho_\ell(v)$, it requires $\rho_\ell(u)$ and $\rho_\ell(v)$ to be different by a p-proper $3$-labelling $\ell$. Because $2$ and~$3$ are coprime, this implies that, in order for $\rho_\ell(u) \neq \rho_\ell(v)$ to hold, the decomposition of $\rho_\ell(u)$ into prime factors must differ from that of $\rho_\ell(v)$. In other words, if we denote by $d_i(w)$ the \textit{$i$-degree} of a vertex $w$ by a labelling as the number of edges incident to $w$ assigned label~$i$, then, by a $3$-labelling $\ell$, $\rho_\ell(u) \neq \rho_\ell(v)$ holds if and only if either $d_2(u) \neq d_2(v)$ or $d_3(u) \neq d_3(v)$.

That last property makes labels~$2$ and~$3$ by a $3$-labelling $\ell$ very close in terms of vertex multisets and vertex products, since also $\mu_\ell(u) \neq \mu_\ell(v)$ holds as soon as $d_2(u) \neq d_2(v)$ or $d_3(u) \neq d_3(v)$. Thus, the difference between m-proper $3$-labellings and p-proper $3$-labellings only lies in the behaviour of label~$1$: for the first objects, every edge $uv$ labelled~$1$ contributes to both $\mu_\ell(u)$ and $\mu_\ell(v)$, while, for the second objects, every edge $uv$ labelled~$1$ contributes to none of $\rho_\ell(u)$ and $\rho_\ell(v)$. For that reason, m-proper $3$-labellings are not p-proper in general; however, there are contexts where this is the case, such as the following meaningful one:

\begin{observation}
Nice regular graphs verify the product version of the 1-2-3 Conjecture.
\end{observation}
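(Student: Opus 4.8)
The plan is to derive this immediately from Theorem~\ref{theorem:vuckovic}, by observing that in a regular graph an m-proper $3$-labelling is automatically p-proper. So I would start by fixing a $d$-regular nice graph $G$ (the edgeless case $d=0$ being trivial, and $d=1$ being excluded by niceness) and, applying Theorem~\ref{theorem:vuckovic}, picking an m-proper $3$-labelling $\ell$ of $G$. It then suffices to check that this very $\ell$ is p-proper.

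The crux is the redundancy of label~$1$ in the regular setting. For a $3$-labelling, both $\mu_\ell(v)$ and $\rho_\ell(v)$ are functions of the triple of $i$-degrees $(d_1(v),d_2(v),d_3(v))$: the multiset $\mu_\ell(v)$ consists of $d_i(v)$ copies of $i$ for each $i\in\{1,2,3\}$, while $\rho_\ell(v)=2^{d_2(v)}3^{d_3(v)}$. Since $G$ is $d$-regular, every vertex $v$ satisfies $d_1(v)+d_2(v)+d_3(v)=d$, so $d_1(v)=d-d_2(v)-d_3(v)$ is itself determined by the pair $(d_2(v),d_3(v))$. Hence, for any edge $uv$, the following are equivalent: $\mu_\ell(u)=\mu_\ell(v)$; the triples $(d_1(u),d_2(u),d_3(u))$ and $(d_1(v),d_2(v),d_3(v))$ coincide; the pairs $(d_2(u),d_3(u))$ and $(d_2(v),d_3(v))$ coincide (using the previous point, and here the full strength of regularity is what is needed); and, by uniqueness of prime factorisation together with the coprimality of $2$ and $3$, $\rho_\ell(u)=\rho_\ell(v)$. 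Taking contrapositives and using that $\ell$ is m-proper yields $\rho_\ell(u)\neq\rho_\ell(v)$ for every edge $uv$, so $\ell$ is p-proper and $\chip(G)\le\chim(G)\le 3$.

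I do not expect a genuine obstacle here: the entire argument is the elementary remark that, once $d_2$ and $d_3$ are known, $d_1$ carries no extra information in the regular case, which is exactly what makes the only distinction highlighted in the text between m-proper and p-proper $3$-labellings collapse. The only mild care needed is to dispose of the degenerate low-degree and edgeless boundary cases in a sentence before invoking Theorem~\ref{theorem:vuckovic}.
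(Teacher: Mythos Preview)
Your proposal is correct and follows essentially the same approach as the paper: take an m-proper $3$-labelling from Theorem~\ref{theorem:vuckovic} and note that regularity forces $d_1$ to be determined by $(d_2,d_3)$, so equal products would imply equal multisets, a contradiction. The paper's argument is phrased as a direct contrapositive without separately treating the $d=0$ case, but the substance is identical.
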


\begin{proof}
Let $G$ be a nice $\Delta$-regular graph.
By Theorem~\ref{theorem:vuckovic}, there exists an m-proper $3$-labelling $\ell$ of $G$. We claim it is also p-proper. Indeed, by arguments above, if $\rho_\ell(u)=\rho_\ell(v)$ holds for some edge $uv$, then $d_2(u)=d_2(v)$ and $d_3(u)=d_3(v)$. Since $d(u)=d(v)=\Delta$, this means also $d_1(u)=d_1(v)$ holds. We then deduce that $\mu_\ell(u)=\mu_\ell(v)$ holds, a contradiction. Thus, no two adjacent vertices of $G$ have the same product of labels.
\end{proof}

Our main intention in this paper is to investigate how the mechanisms in the proof of Theorem~\ref{theorem:vuckovic} can be used in the product setting. In Section~\ref{section:4chromatic}, we first prove that the product version of the 1-2-3 Conjecture holds for $4$-chromatic graphs, which goes beyond the best known such result to date for the sum version, which is, as stated earlier, that $4$-edge-connected $4$-chromatic graphs verify the sum version of the 1-2-3 Conjecture~\cite{WZZ17}. In Section~\ref{section:low-conflicts}, we give a result that is close to the product version of the conjecture, as we describe how to design, for any nice graph, $3$-labellings that are very close to be p-proper. This leads us to raising a conjecture on almost p-proper $2$-labellings in Section~\ref{section:weak-conjecture}, that matches an existing weakening of the sum version of the 1-2-3 Conjecture from~\cite{GWW15}. We finally verify our conjecture for several classes of graphs.

%%%%%%%%%%%%%%%%%%%%%%%%%%%%%%%%%%%%%%%%%%%%%%%%%%%%%%%%%%%%%%%%%%%%
%%%%%%%%%%%%%%%%%%%%%%%%%%%%%%%%%%%%%%%%%%%%%%%%%%%%%%%%%%%%%%%%%%%%
%%%%%%%%%%%%%%%%%%%%%%%%%%%%%%%%%%%%%%%%%%%%%%%%%%%%%%%%%%%%%%%%%%%%
%%%%%%%%%%%%%%%%%%%%%%%%%%%%%%%%%%%%%%%%%%%%%%%%%%%%%%%%%%%%%%%%%%%%
%%%%%%%%%%%%%%%%%%%%%%%%%%%%%%%%%%%%%%%%%%%%%%%%%%%%%%%%%%%%%%%%%%%%
%%%%%%%%%%%%%%%%%%%%%%%%%%%%%%%%%%%%%%%%%%%%%%%%%%%%%%%%%%%%%%%%%%%%
%%%%%%%%%%%%%%%%%%%%%%%%%%%%%%%%%%%%%%%%%%%%%%%%%%%%%%%%%%%%%%%%%%%%
%%%%%%%%%%%%%%%%%%%%%%%%%%%%%%%%%%%%%%%%%%%%%%%%%%%%%%%%%%%%%%%%%%%%
%%%%%%%%%%%%%%%%%%%%%%%%%%%%%%%%%%%%%%%%%%%%%%%%%%%%%%%%%%%%%%%%%%%%
%%%%%%%%%%%%%%%%%%%%%%%%%%%%%%%%%%%%%%%%%%%%%%%%%%%%%%%%%%%%%%%%%%%%
%%%%%%%%%%%%%%%%%%%%%%%%%%%%%%%%%%%%%%%%%%%%%%%%%%%%%%%%%%%%%%%%%%%%
%%%%%%%%%%%%%%%%%%%%%%%%%%%%%%%%%%%%%%%%%%%%%%%%%%%%%%%%%%%%%%%%%%%%
%%%%%%%%%%%%%%%%%%%%%%%%%%%%%%%%%%%%%%%%%%%%%%%%%%%%%%%%%%%%%%%%%%%%
%%%%%%%%%%%%%%%%%%%%%%%%%%%%%%%%%%%%%%%%%%%%%%%%%%%%%%%%%%%%%%%%%%%%
%%%%%%%%%%%%%%%%%%%%%%%%%%%%%%%%%%%%%%%%%%%%%%%%%%%%%%%%%%%%%%%%%%%%
%%%%%%%%%%%%%%%%%%%%%%%%%%%%%%%%%%%%%%%%%%%%%%%%%%%%%%%%%%%%%%%%%%%%
%%%%%%%%%%%%%%%%%%%%%%%%%%%%%%%%%%%%%%%%%%%%%%%%%%%%%%%%%%%%%%%%%%%%
%%%%%%%%%%%%%%%%%%%%%%%%%%%%%%%%%%%%%%%%%%%%%%%%%%%%%%%%%%%%%%%%%%%%
%%%%%%%%%%%%%%%%%%%%%%%%%%%%%%%%%%%%%%%%%%%%%%%%%%%%%%%%%%%%%%%%%%%%
%%%%%%%%%%%%%%%%%%%%%%%%%%%%%%%%%%%%%%%%%%%%%%%%%%%%%%%%%%%%%%%%%%%%

\section{The Multiplicative 1-2-3 Conjecture for $4$-chromatic graphs}\label{section:4chromatic}

For a graph $G$, a \textit{proper $k$-vertex-colouring} is a partition $(V_1, \dots, V_k)$ of $V(G)$ into independent sets, and the \textit{chromatic number} $\chi(G)$ of $G$ is the smallest $k$ such that there exist proper $k$-vertex-colourings of $G$. Recall that $G$ is \textit{$k$-chromatic} if its chromatic number is exactly~$k$. Equivalently, $G$ is $k$-chromatic if it admits proper $k$-vertex-colourings, but no proper $k'$-vertex-colourings with $k'<k$.

In this section, we prove the following:

\begin{theorem}\label{theorem:4chromatic}
If $G$ is a $4$-chromatic graph, then $\chip(G) \leq 3$.
\end{theorem}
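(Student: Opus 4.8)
The natural strategy is to mimic the proof of Theorem 1.1 (Vučković), which produces an m-proper $3$-labelling, and patch the only place where m-properness fails to give p-properness, namely edges $uv$ with $d(u)\ne d(v)$ where the $1$-degrees differ but the $2$- and $3$-degrees agree. The key conceptual reduction, already set up in the excerpt, is that a $3$-labelling $\ell$ is p-proper precisely when every edge $uv$ has $d_2(u)\ne d_2(v)$ or $d_3(u)\ne d_3(v)$; so label~$1$ behaves like a ``neutral'' label and we are really colouring vertices by the pair $(d_2(v),d_3(v))$. The plan is therefore: first build an m-proper $3$-labelling $\ell$ of $G$ via Theorem 1.1 (assuming $G$ is nice — a $4$-chromatic graph has no $K_2$ component, so this is fine), then examine the ``bad'' edges $uv$ where $\mu_\ell(u)\ne\mu_\ell(v)$ but $(d_2(u),d_3(u))=(d_2(v),d_3(v))$; such an edge forces $d_1(u)\ne d_1(v)$, hence $d(u)\ne d(v)$.

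First I would analyse the structure of the bad edges. A bad edge $uv$ has $d(u)\ne d(v)$, say $d(u)<d(v)$, and the same $2$- and $3$-degrees. I would try to show that the set of bad edges forms a sparse subgraph (e.g.\ a forest, or something with small maximum degree, or a graph where each vertex is ``bad'' towards few neighbours), exploiting the degree inequality: along any path of bad edges the degrees must strictly increase and decrease alternately in a constrained way, which limits the structure. The cleanest outcome to hope for would be that each bad edge can be ``fixed'' locally by rerouting a label among $\{1,2,3\}$ on edges incident to the lower-degree endpoint without disturbing previously fixed edges.

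The role of $4$-chromaticity is presumably to give a proper $4$-vertex-colouring $(V_1,V_2,V_3,V_4)$ that orchestrates these local fixes globally: process colour classes in order, and when handling a vertex $v\in V_i$ use the colouring to guarantee that the edges we modify go to higher-indexed classes (or to already-fixed vertices in a controlled way), so that changes propagate in one direction only. This is analogous to how $4$-colourability is used in Vučković's argument and in the $3$-colourable cases of the sum/product versions; the $4$ colours give enough room to re-assign a label on one incident edge of the smaller-degree endpoint of each bad edge so as to change its $2$-degree or $3$-degree, while the ordering prevents creating new conflicts. I would set up a potential/invariant asserting that after processing $V_1,\dots,V_i$, no bad edge has both endpoints in $V_1\cup\cdots\cup V_i$.

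The main obstacle I anticipate is exactly this last step: ensuring that fixing a bad edge $uv$ does not create a new bad edge at $u$ (whose other incident edges change) or at some third vertex $w$ adjacent to the edge we re-labelled. Because p-properness only tracks the pair $(d_2,d_3)$ rather than the full multiset, a single label change at $u$ can simultaneously repair the conflict with $v$ and break a conflict with another neighbour $w$, and we need the $4$-colouring plus the degree inequality on bad edges to guarantee that the damage can always be confined to edges not yet processed. Making this bookkeeping airtight — choosing \emph{which} label to change, at \emph{which} endpoint, and verifying no backward conflict is introduced — is where the real work lies; everything else (invoking Theorem 1.1, the parity/degree observations, the reduction to the pair $(d_2,d_3)$) is routine given the setup in the excerpt.
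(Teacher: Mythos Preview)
Your high-level instinct to mimic Vu\v{c}kovi\'c is sound, but the concrete plan---apply Theorem~1.1 as a \emph{black box} and then repair the resulting bad edges using an unrelated $4$-colouring---has a genuine gap, and the paper does not proceed this way.

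The gap is that the m-proper labelling produced by Theorem~1.1 bears no structural relationship to a $4$-colouring you choose afterwards. Your hope that bad edges are ``sparse'' is unfounded: a vertex can be the lower-degree endpoint of arbitrarily many bad edges simultaneously (think of a high-degree vertex all of whose incident labels are~$1$ in an m-proper labelling; every neighbour of smaller degree with $(d_2,d_3)=(0,0)$ yields a bad edge). The degree-alternation observation along a bad path does not prevent stars of bad edges. And when you attempt a local fix at one endpoint, you change its $(d_2,d_3)$ pair and potentially create new bad edges toward \emph{every} other neighbour; there is no invariant coupling the Vu\v{c}kovi\'c labelling to your $4$-colouring that would confine this damage to ``later'' colour classes. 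You correctly flag this bookkeeping as the crux, but there is no reason to expect it can be made to work in this framework.

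What the paper actually does is the opposite: it builds the labelling \emph{from scratch} around the $4$-colouring $(V_1,V_2,V_3,V_4)$, never invoking Theorem~1.1. First it arranges (via a standard ``push vertices up'' argument) that every vertex in $V_i$ for $i\ge 2$ has an upward neighbour in each $V_j$, $j<i$. Then it labels upward edges of $V_4$ and $V_3$ so that every vertex of $V_4$ is bichromatic with \emph{even} $3$-degree and every vertex of $V_3$ is bichromatic with \emph{odd} $3$-degree; this alone kills all conflicts involving $V_3\cup V_4$. At that stage vertices of $V_1$ are $1$- or $2$-monochromatic and vertices of $V_2$ are $1$- or $3$-monochromatic, so the only remaining conflicts are between $1$-monochromatic vertices in $V_1$ and $V_2$. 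These are then eliminated by a component-by-component analysis of the bipartite subgraph they span, using a parity-switching lemma (swap labels $1\leftrightarrow 2$ along paths to control parities of $2$-degrees in each part) together with a short case analysis on the local neighbourhood. The $4$-colouring is thus not a bookkeeping device for repairs; it is the scaffold that dictates the target $(d_2,d_3)$-signatures class by class.
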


\begin{proof}
Let $(V_1,V_2,V_3,V_4)$ be a proper $4$-vertex-colouring of $G$. For any vertex $v \in V_i$, an \textit{upward edge} (resp. \textit{downward edge}) is an incident edge going to a vertex in some $V_j$ with $j<i$ (resp. $j>i$). Note that all vertices in $V_1$ have no upward edges, while all vertices in $V_4$ have no downward edges. 

Free to move vertices from part to part, we may assume that, for every vertex $v$ in any $V_i$ with $i \in \{2,3,4\}$, there is an upward edge going to each of $V_1,\dots,V_{i-1}$. Indeed, if there is a $j<i$ such that $v$ has no upward edge to $V_j$, then by moving $v$ to $V_j$ we obtain another $4$-vertex-colouring of $G$ that is proper. By repeating this moving process as long as needed, we eventually reach a proper $4$-vertex-colouring with the desired property. In particular, note that the process finishes since vertices are only moved to parts with lower index. Furthermore, since $G$ is $4$-chromatic, none of the four parts can become empty.

A p-proper $3$-labelling $\ell$ of $G$ will be obtained through two main stages. The first stage will consist in considering the vertices in $V_4$ and $V_3$, and labelling their upward edges so that particular types of products are obtained for their vertices, guaranteeing that none of them are in conflict (i.e., have the same product of incident labels). In particular, these vertices will be \textit{bichromatic}, meaning that they have both $2$-degree and $3$-degree at least~$1$. On the contrary, in general, a few cases apart, the vertices in $V_2$ and $V_1$ will be \textit{monochromatic}, meaning that they are not bichromatic. More precisely, for $i \in \{1,2,3\}$, a vertex $v$ is \textit{$i$-monochromatic} if it is incident only to edges labelled~$i$ or~$1$ and $d_i(v) > 0$. In other words, a $1$-monochromatic vertex is a vertex $v$ with $\rho_\ell(v)=1$, while an $i$-monochromatic vertex with $i \in \{2,3\}$ is a vertex $v$ with $\rho_\ell(v)=i^x$ for some $x \geq 1$. Right after the first stage, in a second stage, we will label the edges joining vertices in $V_2$ and $V_1$ to get rid of all remaining conflicts.

As mentioned in the introductory section, recall that two vertices can only be in conflict if they have the same $2$-degree and the same $3$-degree. This means that an $i$-monochromatic vertex and a $j$-monochromatic vertex can only be in conflict if $i=j$, and that a monochromatic vertex and a bichromatic vertex can never be in conflict. More generally speaking, a vertex with $2$-degree~$i$ and $3$-degree~$j$ and a vertex with $2$-degree~$i'$ and $3$-degree~$j'$ can only be in conflict if $i=i'$ and $j=j'$.

\medskip

To ease the understanding of the proof, we start from $\ell$ being the labelling of $G$ assigning label~$1$ to all edges. We then modify some labels so that particular products are obtained for some vertices. We describe this modification process step by step, so that, after each of these steps, we can point out the consequences of our modifications. In particular, the reader should keep in mind that, at any point, any existing conflict is intended to be dealt with in later stages of the modification process. In, particular, note that, at the beginning, all vertices are $1$-monochromatic and are thus all in conflict.

\paragraph{Step 1: Labelling the upward edges of $V_4$ and $V_3$} \mbox{}

\medskip

We first relabel all upward edges of the vertices in $V_4$. To that end, we consider every $v \in V_4$ in arbitrary order, and apply the following:

\begin{enumerate}
    \item for every upward edge $vu$ with $u \in V_1$, we set $\ell(vu)=2$;
    
    \item for every upward edge $vu$ with $u \in V_3$, we set $\ell(vu)=3$;
    
    \item if currently $v$ has odd $3$-degree, then we pick an arbitrary upward edge $vu$ with $u \in V_2$, and set $\ell(vu)=3$.
\end{enumerate}

Recall that such upward edges exist by our original assumption on $(V_1,V_2,V_3,V_4)$.
Also, note that, at this point, the vertices in $V_4$ verify the following:

\begin{claim}\label{claim:V4}
Every vertex of $V_4$ is bichromatic with even $3$-degree.
\end{claim}

Furthermore, at this point, every vertex $v$ in $V_3$ has all its downward edges (if any) assigned label~$3$ by $\ell$. Now, for every such $v \in V_3$, we modify the label of the upward edges as follows:

\begin{enumerate}
    \item for every upward edge $vu$ with $u \in V_1$, we set $\ell(vu)=2$;
    
    \item if currently $v$ has even $3$-degree, then we pick an arbitrary upward edge $vu$ with $u \in V_2$, and set $\ell(vu)=3$.
\end{enumerate}

Note that Claim~\ref{claim:V4} is not impacted by these modifications. Furthermore, it can be checked that the vertices in $V_3$ fulfil the following:

\begin{claim}\label{claim:V3}
Every vertex $v$ of $V_3$ is bichromatic with odd $3$-degree.
\end{claim}

\begin{proofclaim}
If $v$ has downward edges to $V_4$, then they are labelled~$3$ in which case $v$ is bichromatic (regardless of whether the second item applies or not). If $v$ has no downward edges, then the second item of the process applies, and $v$ gets bichromatic by labelling~$3$ an upward edge to $V_2$. In both cases, $v$ has its $3$-degree being of the desired parity.
\end{proofclaim}

Note that Claims~\ref{claim:V4} and~\ref{claim:V3} imply that any two adjacent vertices in $V_4$ and $V_3$ cannot be in conflict, due to their different $3$-degrees. Furthermore, it can be checked that, at this point, the vertices in $V_2$ and $V_1$ meet the following properties:

\begin{claim}\label{claim:V2afterV3}
For every vertex of $V_2$:
\begin{itemize}
    \item all downward edges to $V_4$ and $V_3$ are labelled~$1$ or~$3$;
    
    \item all upward edges to $V_1$ are labelled~$1$.
\end{itemize}
\end{claim}

\begin{claim}\label{claim:V1afterV3}
For every vertex of $V_1$:
\begin{itemize}
    \item all downward edges to $V_4$ and $V_3$ are labelled~$2$;
    
    \item all downward edges to $V_2$ are labelled~$1$.
\end{itemize}
\end{claim}

All previous claims imply that, at the moment, only vertices in $V_2$ and $V_1$ can be in conflict. More precisely, every vertex of $V_1$ is currently either $1$-monochromatic or $2$-monochromatic, while every vertex of $V_2$ is either $1$-monochromatic or $3$-monochromatic. Thus, two adjacent vertices in $V_2$ and $V_1$ can only be in conflict if they are both $1$-monochromatic.
The next stage is dedicated to getting rid of these conflicts.

\paragraph{Step 2: Labelling the edges between $V_1$ and $V_2$}\mbox{}

\medskip

For every vertex $v$ of $G$, we define its \textit{$\{2,3\}$-degree} as the sum $d_2(v)+d_3(v)$ of its $2$-degree and $3$-degree.
It is important to mention that, in all modifications we apply to $\ell$ from this point on, the only way for the $\{2,3\}$-degree of a vertex $v$ in $V_3 \cup V_4$ to change is via setting to~$2$ the label of edges $uv$ with $u \in V_2 \cup V_1$. In particular, note that Claims~\ref{claim:V4} and~\ref{claim:V3} are not impacted by such modifications, as they only alter  $2$-degrees. Hence, through performing such modifications, adjacent vertices in $V_4$ and $V_3$ cannot get in conflict.

For the whole step, we define $\mathcal{H}$ as the set of (connected) components induced by the edges joining $1$-monochromatic vertices of $V_2$ and vertices of $V_1$ (of any type).
Note that any two conflicting vertices at this point, i.e., adjacent vertices being currently $1$-monochromatic, are part of a component $H$ of $\mathcal H$. In what follows, we call an $H$ containing such a pair of conflicting vertices a \textit{conflicting component}. Our main goal here is now to apply local label modifications to get rid of all conflicting components of $\mathcal H$.

It is important to note that two vertices from two distinct components $H_1$ and $H_2$ of $\mathcal{H}$ cannot be adjacent. Assume indeed that $uv$ is an edge of $G$, where $v \in V(H_1) \cap V_2$ and $u \in V(H_2) \cap V_1$. By definition of $\mathcal H$, this means that $v$ is $1$-monochromatic, in which case $H_1$ and $H_2$ altogether induce a component of $\mathcal{H}$. A consequence is that we can freely treat the conflicting components of $\mathcal{H}$ independently.

\medskip

Consider a conflicting component $H \in \mathcal H$.
Note that we would be done with $H$ if we could get rid of all conflicts in $H$ by relabelling its edges so that all vertices in $H$ remain $1$-monochromatic or $2$-monochromatic, as, this way, no conflict with vertices in $V_3$ or $V_4$ would arise (by Claims~\ref{claim:V4} and~\ref{claim:V3}). This is a configuration that can actually almost be attained, in the following sense:

\begin{claim}\label{claim:bipartite}
For every vertex $v$ in any part $V_i \in \{V_1,V_2\}$ of $H$, we can relabel the edges of $H$ with $1$ and $2$ so that $d_2(u)$ is odd for every $u \in V_i \setminus \{v\}$, and $d_2(u)$ is even for every $u \in V_{3-i}$. Similarly, we can relabel the edges of $H$ with $1$ and $2$ so that $d_2(u)$ is even for every $u \in V_i \setminus \{v\}$, and $d_2(u)$ is odd for every $u \in V_{3-i}$.
\end{claim}

\begin{proofclaim}
Assume the conditions of the statement are not already met.
So let us consider any vertex $u$ different from $v$ that does not verify the desired condition.
Since $H$ is connected, there is a path $P$ from $u$ to $v$ in $H$. Now traverse $P$ from $u$ to $v$, and, as going along, switch the label of every traversed edge from~$1$ to~$2$ and \textit{vice versa}. Note that this switching procedure has the following effects:

\begin{itemize}
    \item for every internal vertex of $P$, the parity of its $2$-degree is not altered;
    
    \item for each of the two ends $u$ and $v$ of $P$, the parity of its $2$-degree is altered.
\end{itemize}

This way, note that $u$ now satisfies its desired condition, while, for all vertices of $H$ different from $u$ and $v$, the situation regarding their desired condition has not changed.

By repeating this switching procedure as long as desired, we eventually get that all vertices different from $v$ have their $2$-degree meeting the desired parity condition.
\end{proofclaim}

To deal with $H$, we now apply certain label modifications depending on the surroundings of $H$. We start off by considering the following three cases. In each case, it is implicitly assumed that the previous ones do not apply.

\begin{itemize}
    \item \textbf{Case~1.} There is a vertex $v \in V(H) \cap V_2$ with a neighbour $w \in V_3 \cup V_4$. 
    
    Recall that $v$ is $1$-monochromatic (by definition of $\mathcal H$), and thus $vw$ is currently labelled~$1$. According to Claim~\ref{claim:bipartite}, in $H$ we can relabel edges with~$1$ to~$2$ so that all vertices in $V(H) \cap V_1$ have even $2$-degree while all vertices in $V(H) \cap V_2 \setminus \{v\}$ have odd $2$-degree. If also $v$ has odd $2$-degree, then no conflict remains in $H$. Otherwise, i.e., $v$ has even $2$-degree, then we change the label of $vw$ to $2$. As a result, $v$ now gets odd $2$-degree as well, while $w$ remains bichromatic with the same $3$-degree. Thus, no conflict remains in $H$, and no new conflict is created in $G$.
    
    \item \textbf{Case~2.} There is a $1$-monochromatic vertex $u \in V(H) \cap V_1$ with a $3$-monochromatic neighbour $v \in V_2$.
    
    Since $u$ is $1$-monochromatic, by Claim~\ref{claim:V1afterV3} it has no neighbour in $V_3 \cup V_4$. Also, $uv$ is currently labelled~$1$. As in the previous case, according to Claim~\ref{claim:bipartite} we can relabel with~$1$ and~$2$ the edges of $H$ to reach a situation where all vertices in $V(H) \cap V_2$ have odd $2$-degree while all vertices in $V(H) \cap V_1 \setminus \{u\}$ have even $2$-degree. If $u$ also has even $2$-degree, then we are done. Otherwise, $u$ has odd $2$-degree (thus at least~$1$). In that case, we assign label~$3$ to $uv$. As a result, $u$ gets bichromatic with no such neighbour while $v$ remains $3$-monochromatic. Thus, no conflict remains in $H$, and no new conflict was created in $G$.
    
    \item \textbf{Case~3.} There is a $1$-monochromatic vertex $u \in V(H) \cap V_1$ with $p \geq 2$ neighbours $v_1,\dots,v_p \in V(H) \cap V_2$.
    
    Because $u$ is $1$-monochromatic, by Claim~\ref{claim:V1afterV3} it has no neighbour in $V_3 \cup V_4$. Also,
    since previous Cases~1 and~2 did not apply, $u$ has no $3$-monochromatic neighbour in $V_2$, thus all $v_i$'s are $1$-monochromatic, and the $v_i$'s have no neighbours in $V_3 \cup V_4$. We here consider $H'=H-u$. Let us denote by $C_1,\dots,C_r$ the components of $H'$. By Claim~\ref{claim:bipartite}, in each $C_j$ ($1 \leq j \leq r$) we can relabel the edges with~$1$ and~$2$ so that all vertices in $V(C_j) \cap V_1$ have even $2$-degree while all vertices in $V(C_j) \cap V_2$ but maybe one of the $v_i$'s have odd $2$-degree. Note that this can be attained since each of the $C_j$'s contains at least one of the $v_i$'s. Finally, assign label~$3$ to all $uv_i$'s. As a result, $u$ becomes $3$-monochromatic with $3$-degree at least~$2$, while its only $3$-monochromatic neighbours are possibly some of the $v_i$'s, in which case these have $3$-degree~$1$. In $C_j$, the only vertex (one of the $v_i$'s) that was possibly in conflict with some vertices has turned bichromatic or $3$-monochromatic, while its neighbours remain $1$-monochromatic or $2$-monochromatic. Recall also that the $v_i$'s that became bichromatic cannot be adjacent to another bichromatic vertex, as, in particular, they have no neighbours in $V_3 \cup V_4$. Thus no conflict remains in $H$, and no new conflict was created in $G$.
\end{itemize}

Consider any remaining conflicting component $H \in \mathcal H$. Because Cases~1 to~3 above did not apply to $H$, the following holds:

\begin{claim}\label{claim:remaining-compo-H}
For every remaining conflicting component $H \in \mathcal H$:
\begin{itemize}
    \item all $1$-monochromatic vertices $u \in V(H) \cap V_1$ have degree $1$ in $G$;
    \item all $1$-monochromatic vertices of $H$ have no bichromatic neighbours in $G$.
\end{itemize}
\end{claim}

\begin{proofclaim}
By definition of $\mathcal H$, all vertices of $V(H) \cap V_2$ are $1$-monochromatic, while all $1$-monochromatic vertices of $V(H) \cap V_1$ have no neighbours in $V_3 \cup V_4$ (Claim~\ref{claim:V1afterV3}). Since $H$ did not verify Case~1 above, it has no vertex in $V(H) \cap V_2$ with a neighbour in $V_3 \cup V_4$. Similarly, because Cases~2 and~3 above did not apply, $H$ has no $1$-monochromatic vertex in $V(H) \cap V_1$ having a $3$-monochromatic neighbour in $V_2$ or two $1$-monochromatic neighbours in $V(H) \cap V_2$. Then the claim holds.
\end{proofclaim}

Let us now repeatedly apply the following procedure to $H$:

\begin{itemize}
    \item As long as $H$ has a $1$-monochromatic vertex $v \in V(H) \cap V_2$ with two $1$-monochromatic neighbours $u_1,u_2 \in V(H) \cap V_1$, we assign label~$3$ to $vu_1$ and $vu_2$. 
\end{itemize}

Note that this raises no conflict. On the one hand, $u_1$ and $u_2$ get $3$-monochromatic with $3$-degree~$1$ while $v$ is their unique $3$-monochromatic neighbour, and it has $3$-degree~$2$. Recall that $v$ is actually the unique $3$-monochromatic neighbour of $u_1$ and $u_2$, since $u_1$ and $u_2$ have degree~$1$ in the whole of $G$. Conversely, $u_1$ and $u_2$ are the only $3$-monochromatic neighbours of $v$, since the only $3$-monochromatic vertices we create in $V(H) \cap V_1$ during this procedure have degree~$1$ in $G$, and thus in $H$.
Thus, no new conflict arises.

Once the previous procedure has been repeated as long as possible for every remaining conflicting component $H$ of $\mathcal H$, note that the remaining conflicts involve disjoint edges $uv$ where $u \in V_1$, $v \in V_2$, and $u$ and $v$ are $1$-monochromatic.
%and belong to some $H \in \mathcal H$. 
Furthermore, since previous Cases~1 to~3 did not apply to $H$, we deduce that $u$ has degree precisely~$1$ in $G$, that $v$ has no other $1$-monochromatic neighbour in $H$, and that $v$ has no neighbour in $V_3 \cup V_4$. Since $G$ is nice, we must have $d(v) \geq 2$, which means that $v$ has other neighbours in $V(H) \cap V_1$. Any such neighbour $u' \in V(H) \cap V_1$ must be $2$-monochromatic. Indeed, on the one hand, if $u'$ is $1$-monochromatic, then the repeated process above could have been applied once more to $H$. On the other hand, note that $u'$ cannot be $3$-monochromatic with $vu'$ being labelled~$1$, as, in the process above, only degree-$1$ vertices of $G$, and thus of $V(H) \cap V_1$, get $3$-monochromatic. Now, we assign label~$2$ to $u'v$ and label~$3$ to $vu$. As a result, $u'$ remains $2$-monochromatic, $v$ gets bichromatic, while $u$ gets $3$-monochromatic. Then no new conflict arises, because, in particular, $u'$ remains $2$-monochromatic with no such neighbours, $v$ cannot have bichromatic neighbours by Claim~\ref{claim:remaining-compo-H}, and $u$ is only neighbouring $v$.

Eventually, the $3$-labelling $\ell$ has no remaining conflicts, and is thus p-proper.
\end{proof}

\section{Restricted product conflicts by $3$-labellings}\label{section:low-conflicts}

In this section, we show how the proof of Theorem~\ref{theorem:4chromatic} can be generalized, to prove that all graphs admit $3$-labellings $\ell$ that are ``almost'' p-proper. By that, we mean that if there are product conflicts by $\ell$, then the structures induced by the conflicting vertices are somewhat weak.
This is with respect to the following notion.
Let $\ell$ be a labelling of a graph $G$. 
For any $x \geq 1$, we denote by $S_x$ the set of vertices $v$ of $G$ with $\rho_\ell(v)=x$. Rephrased differently, the product version of the 1-2-3 Conjecture states that every nice graph $G$ admits a $3$-labelling such that $S_x$ is an independent set for every $x \geq 1$.

In the next result, we prove that every graph admits a $3$-labelling where all $S_x$'s are independent, with the exception of perhaps $S_1$, 
which might induce independent edges.

\begin{theorem}\label{theorem:star}
Every graph $G$ admits a $3$-labelling such that $S_1$ induces a (possibly empty) matching and isolated vertices while all other $S_x$'s are independent sets.
\end{theorem}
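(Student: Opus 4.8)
The plan is to mimic the structure of the proof of Theorem~\ref{theorem:4chromatic}, but starting from a proper vertex-colouring with possibly more than four colour classes. Fix a proper $k$-vertex-colouring $(V_1,\dots,V_k)$ of $G$ (we may assume $G$ is connected, handling $K_2$-components and isolated vertices separately — note that $K_2$ falls precisely into the allowed matching in $S_1$). As in the $4$-chromatic proof, we first perform the ``shifting'' normalisation: repeatedly move a vertex $v\in V_i$ to a lower class $V_j$ whenever $v$ has no upward edge to $V_j$, so that in the end every $v\in V_i$ has an upward edge to each of $V_1,\dots,V_{i-1}$. This terminates since vertices only move downward, and it guarantees each nonempty $V_i$ with $i\ge 2$ supplies the upward edges needed below.

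Next I would reuse Step~1 essentially verbatim, but applied to the ``top'' classes: process $V_k,V_{k-1},\dots,V_3$ in decreasing order, at each class $V_i$ labelling upward edges to $V_1$ with $2$, upward edges to classes $V_j$ with $3\le j<i$ with $3$, and using one upward edge to $V_2$ to fix the parity of the $3$-degree — so that (analogues of Claims~\ref{claim:V4} and~\ref{claim:V3}) every vertex of $V_i$ for $i\ge 3$ is bichromatic, with the $3$-degree parity strictly alternating as $i$ decreases; hence no two adjacent vertices among $V_3,\dots,V_k$ conflict, and none of them conflicts with any monochromatic vertex. This isolates all remaining potential conflicts to the bipartite-like interaction between $V_1$ and $V_2$ exactly as before: every vertex of $V_1$ is $1$- or $2$-monochromatic, every vertex of $V_2$ is $1$- or $3$-monochromatic, so the only surviving conflicts are between $1$-monochromatic vertices, confined to the components $\mathcal H$ induced by edges between $1$-monochromatic vertices of $V_2$ and vertices of $V_1$. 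I would then run Step~2's Claim~\ref{claim:bipartite} parity-switching trick and Cases~1--3 to kill conflicting components that touch higher classes or have a vertex of degree $\ge 2$ on a suitable side. The one place where Theorem~\ref{theorem:4chromatic} exploited $k=4$ is the very last paragraph, where niceness forces $d(v)\ge 2$ and one finds a $2$-monochromatic neighbour $u'$ of $v$ to re-route label $2$ through; in the general setting a conflicting pair can be a genuinely isolated edge $uv$ with $d(u)=d(v)=1$, i.e., a $K_2$-component, and also after all reductions one can be left with disjoint edges $uv$, $u\in V_1$, $v\in V_2$, both $1$-monochromatic and each of degree $1$ inside its component but where $v$ has no usable neighbour — these are exactly the edges we are \emph{allowed} to leave in $S_1$.

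So the final step is bookkeeping: after applying all of Step~2's machinery, argue that the only surviving conflicts consist of pairwise disjoint edges $uv$ with both endpoints $1$-monochromatic, and that two such conflicting edges cannot share a vertex (if $v\in V_2$ had two $1$-monochromatic neighbours the repeated degree-$2$ procedure of Step~2 would have applied; if $u\in V_1$ had two $1$-monochromatic neighbours in $V_2$, Case~3 would have applied). Hence $S_1$, restricted to conflicting vertices, induces a matching together with isolated vertices, and every other $S_x$ is independent since all reductions preserve Claims~\ref{claim:V4}--\ref{claim:V3} and create no new conflicts outside $S_1$. The main obstacle, and the only real content beyond transcribing the earlier proof, is verifying that the case analysis of Step~2 goes through unchanged with $k$ colour classes instead of $4$ — concretely, checking that the properties of $V_1$ and $V_2$ vertices recorded in Claims~\ref{claim:V2afterV3} and~\ref{claim:V1afterV3} still hold (they do, because a vertex of $V_1$ or $V_2$ only ever receives labels $1,2,3$ on downward edges according to the same rules, irrespective of how many classes lie above), and that the ``no edges between distinct components of $\mathcal H$'' observation and the termination of the switching/reduction procedures are untouched by the larger colouring. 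Once that is in place, the leftover disjoint $1$-monochromatic edges are precisely the matching permitted in the statement, completing the proof.
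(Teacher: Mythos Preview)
Your Step~1 has a genuine gap. Alternating the parity of the $3$-degree as $i$ decreases only separates \emph{consecutive} classes $V_i$ and $V_{i+1}$; it says nothing about $V_i$ versus $V_{i+2}$, and vertices in those two classes can certainly be adjacent. Concretely, under your scheme a vertex of $V_3$ and a vertex of $V_5$ are both bichromatic with the same $3$-degree parity, both receive label~$2$ exactly on their upward edges to $V_1$, and both receive label~$3$ on all remaining non-$V_2$ incident edges; nothing prevents them from having the same $2$-degree and the same $3$-degree, hence the same product. So the sentence ``hence no two adjacent vertices among $V_3,\dots,V_k$ conflict'' is simply false for $k\ge 5$, and the reduction to the $V_1$--$V_2$ interface collapses.

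The paper avoids this by a genuinely stronger invariant in Step~1: for $v\in V_{2n}$ ($n\ge 2$) it forces the $3$-degree to be \emph{exactly} $n$ and the $\{2,3\}$-degree odd, while for $v\in V_{2n+1}$ ($n\ge 1$) it forces the $2$-degree to be \emph{exactly} $n$ and the $\{2,3\}$-degree even. This pins down one of the two coordinates rather than just its parity, and the achievability of these exact values uses that $v$ has upward edges to \emph{each} lower class. As a consequence, Step~2 cannot be transcribed from Theorem~\ref{theorem:4chromatic} either: relabelling an edge from $V_2$ (or $V_1$) up to some $V_i$ with $i\ge 3$ may destroy the ``$2$-degree exactly $n$'' or ``$3$-degree exactly $n$'' constraint there, so Cases~1--3 do not survive verbatim. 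The paper instead runs a different clean-up on the $V_1$--$V_2$ bipartite piece, built around a reserved ``special'' product type ($3$-degree~$1$, odd $\{2,3\}$-degree at least~$3$) that the Step~1 invariants guarantee never occurs in $V_3,\dots,V_k$, and a parity argument on components that may leave exactly one $1$-monochromatic edge per component. Your final bookkeeping paragraph is morally right about what survives, but the route to get there needs both of these changes.
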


\begin{proof}
We may assume that $G$ is connected.
If $G$ is $K_2$, then it suffices to assign label~$1$ to its only edge.
If $G$ is $3$-colourable, then $G$ admits a p-proper $3$-labelling (according to~\cite{SK12}), by which every $S_x$ is an independent set. The same conclusion holds if $G$ is $4$-chromatic, by Theorem~\ref{theorem:4chromatic}.
Thus, we may suppose that $G$ is $k$-chromatic for some $k \geq 5$. Let us thus consider $(V_1,\dots,V_k)$ a proper $k$-vertex-colouring of $G$, where $k = \chi(G)$. By similar arguments as in the proof of Theorem~\ref{theorem:4chromatic}, we may assume that every vertex $v \in V_i$ with $i>1$ has upward edges to every part $V_1,\dots,V_{i-1}$.

Just as in the proof of Theorem~\ref{theorem:4chromatic}, we start from a labelling $\ell$ of $G$ assigning label~$1$ to all edges. We then consider the vertices of $V_k, V_{k-1}, \dots, V_3$ following that order and modify the labels of their upward edges so that certain products are obtained. Eventually, we will handle the edges joining the vertices in $V_1$ and $V_2$ so that additional conditions are met to make sure that only particular conflicts remain.

During a first modification phase, we aim at having the vertices verifying the following:

\begin{itemize}
    \item $v \in V_1$: $1$-monochromatic or $2$-monochromatic;
    \item $v \in V_2$: $1$-monochromatic or $3$-monochromatic;
    
    \item $v \in V_3$: bichromatic, $2$-degree~$1$, and even $\{2,3\}$-degree;
    
    \item $v \in V_4$: bichromatic, $3$-degree~$2$, and odd $\{2,3\}$-degree;
    \item $v \in V_5$: bichromatic, $2$-degree~$2$, and even $\{2,3\}$-degree;
    
    \item ...
    
    \item $v \in V_{2n}$, $n \geq 3$: bichromatic, $3$-degree~$n$, and odd $\{2,3\}$-degree;
    \item $v \in V_{2n+1}$, $n \geq 3$: bichromatic, $2$-degree~$n$, and even $\{2,3\}$-degree;
    
    \item ...
\end{itemize}

We note that if we can produce a $3$-labelling with the vertex properties above,
then the only possible conflicts would be along edges $uv$ such that $u \in V_1$, $v \in V_2$, and both $u$ and $v$ are $1$-monochromatic. 
Indeed, two vertices $u$ and $v$ such that $u \in V_1 \cup V_2$ and $v \in V_3 \cup \dots \cup V_k$ cannot be in conflict since monochromatic vertices and bichromatic vertices cannot be in conflict. Two vertices $u$ and $v$ with $u \in V_{2n}$ and $v \in V_{2n'+1}$ for $n \geq 2$ and $n' \geq 1$ cannot be in conflict since vertices with different $\{2,3\}$-degrees cannot be in conflict. Finally, two vertices $u$ and $v$ with $u \in V_{2n+p}$ and $v \in V_{2n'+p}$ for $n \geq 2$, $n' \geq 3$ ($n \neq n'$) and $p \in \{0,1\}$ cannot be in conflict since bichromatic vertices can only be in conflict if they have the same $2$-degree and $3$-degree.

\medskip

Let us now describe how to modify $\ell$ so that the conditions above are met. We consider the vertices of $V_k,\dots,V_3$ following that order, from bottom to top, and modify labels assigned to upward edges. An important condition we will maintain, is that every vertex in an odd part $V_{2n+1}$ ($n \geq 1$) has all its downward edges (if any) labelled~$3$ or~$1$, while every vertex in an even part $V_{2n}$ ($n \geq 2$) has all its downward edges (if any) labelled~$2$ or~$1$. Note that this is trivially verified for the vertices in $V_k$, since they have no downward edges.

Assume we are currently considering a vertex $v$ in, say, an even part $V_{2n}$ with $n \geq 2$. By the hypothesis above, all downward edges of $v$ are labelled~$2$ or~$1$. Since all upward edges of $v$ are currently labelled~$1$, the $3$-degree of $v$ is currently $0$. Let us consider each of the $n$ parts $V_2,V_3,V_5,\dots,V_{2n-1}$. Recall that $v$ has a neighbour $u_i$ in each of these parts. Then we modify the label of each such edge $vu_i$ so that it becomes~$3$. This way, note that the $3$-degree of $v$ becomes exactly $n$, as required. Note also that we do not spoil the desired downward condition for the $u_i$'s. Now, depending on how many downward edges of $v$ are labelled~$2$, we claim that we can always turn to~$2$ the label of one or two upward edges so that $v$ gets bichromatic with odd $\{2,3\}$-degree as desired. Indeed, if $n \geq 3$, then we can freely change to~$2$ the label of an upward edge of $v$ to each of $V_4$ and $V_1$ to get $v$ as desired. If $n=2$, then note that $v$ might be missing at most one incident edge labelled~$2$. Indeed, if, on the one hand, an odd number of downward edges of $v$ are labelled~$2$, then $v$ is already bichromatic with odd $\{2,3\}$-degree. On the other hand, if an even number of downward edges of $v$ are labelled~$2$, then $v$ currently has even $\{2,3\}$-degree, in which case we make it odd by changing to~$2$ the label of an upward edge to $V_1$. This way, note that $v$ has to become bichromatic.

Similar arguments hold in the case when $v$ lies in an odd part $V_{2n+1}$ with $n \geq 1$. Again, all downward edges of $v$ are labelled~$3$ or~$1$, while all upward edges are currently labelled~$1$. We change to~$2$ the label of an upward edge of $v$ to each of the $n$ parts $V_1,V_4,V_6,\dots,V_{2n}$ so that $v$ has $2$-degree $n$. Now, we can make sure that $v$ is bichromatic with even $\{2,3\}$-degree in the following way. If $n \geq 2$, then we can change to~$3$ the label of an upward edge to $V_{2n}$ and/or $V_2$, if needed. If $n=1$, then note that $v$ currently has $2$-degree~$1$. If an odd number of downward edges are labelled~$3$, then $v$ is already bichromatic with even $\{2,3\}$-degree. If an even number of downward edges are labelled~$3$, then we can change to~$3$ the label of an upward edge to $V_2$ to achieve the same conclusion.

\medskip

By arguments above, only adjacent $1$-monochromatic vertices in $V_1$ and $V_2$ can be in conflict. More precisely, recall that the vertices of $V_1$ are $1$-monochromatic or $2$-monochromatic, while the vertices of $V_2$ are $1$-monochromatic or $3$-monochromatic. Another important property of the vertices in $V_3,\dots,V_k$ is that none of them has both $3$-degree~$1$ and odd $\{2,3\}$-degree at least~$3$ (bichromatic). In particular, assuming that, later on, we only turn vertices in $V_1$ into this \textit{special} type,  no conflict can involve special vertices.

As in the proof of Theorem~\ref{theorem:4chromatic}, let us define $\mathcal H$ as the set of components induced by the upward edges (all of which are currently labelled~$1$) of the $1$-monochromatic vertices of $V_2$. If $\mathcal H$ has no component on more than two vertices, then we are done. So let us focus on $H \in \mathcal H$, a component with order at least~$3$. Here as well, no vertex of $H$ is adjacent to a vertex in another component of $\mathcal H$, so we can again freely deal with $H$ without minding the other components. If no two $1$-monochromatic vertices in $H$ are adjacent, then we are done with $H$. So let us assume some adjacent vertices of $H$ are $1$-monochromatic, and some of these $1$-monochromatic vertices actually have at least two $1$-monochromatic neighbours (as otherwise we would be done as well).

We start by performing the following process:

\begin{enumerate}
    \item As long as $H$ has a $1$-monochromatic vertex $v \in V(H) \cap V_2$ with at least two $1$-monochromatic neighbours $u_1,u_2 \in V(H) \cap V_1$, we do the following:
    \begin{itemize}
        \item If $v$ has a $2$-monochromatic neighbour $u'$ in $V_1$ with $d_2(u')=2$, then we set $\ell(u'v)=3$.
        
        \item Otherwise, we set $\ell(vu_1)=\ell(vu_2)=2$.
    \end{itemize}
\end{enumerate}

By this process, every considered vertex $v$ of $V_2$ becomes either  $3$-monochromatic (first case), or $2$-monochromatic with $2$-degree~$2$ (second case). In the first case, all neighbours of $v$ in $V_1$ are special, $1$-monochromatic or $2$-monochromatic, thus not in conflict with $v$. In the second case, the neighbours of $v$ in $V_1$ are all special, $1$-monochromatic or $2$-monochromatic with $2$-degree different from~$2$, thus not in conflict with $v$.

We go on with the following process:

\begin{enumerate}
    \setcounter{enumi}{1}
    \item As long as $H$ has a $1$-monochromatic vertex $u \in V(H) \cap V_1$ with at least two $1$-monochromatic neighbours $v_1,v_2 \in V(H) \cap V_2$, we do the following:
    \begin{itemize}
        \item If $u$ does not have a $3$-monochromatic neighbour in $V_2$, then we set $\ell(uv_1)=\ell(uv_2)=3$.
        
        \item Otherwise, we do nothing.
    \end{itemize}
\end{enumerate}

Note that by repeatedly applying the first of these steps, no new conflict arises. This is because all $3$-monochromatic vertices we create in $V(H) \cap V_2$ have $3$-degree~$1$, while all $3$-monochromatic vertices we create in $V(H) \cap V_1$ have $3$-degree~$2$ while all their $3$-monochromatic neighbours have $3$-degree~$1$.

Let us now have a look at the subgraph $\mathcal B$ of $H$ induced by its remaining $1$-monochromatic vertices. Let us more particularly focus on the components of $\mathcal{B}$. If no such component has order greater than~$2$, then we are done. So let us focus on one component $B$ with order at least~$3$. Since previous Steps~1 and~2 have been performed as long as possible, $B$ must be a star with center $u \in V_1$ and at least two leaves $v_1,v_2$ in $V_2$, and $u$ has $3$-monochromatic neighbours in $V_2$.

Consider now the subgraph $\mathcal C$ of $G$ obtained from the vertices in $\mathcal B$ by adding the incident edges to their $3$-monochromatic neighbours in $V_2$. Note that this graph might have several components; let us focus on one of these components, say $C$. By construction, all vertices in $V(C) \cap V_1$ are $1$-monochromatic. Also, $C$ contains a $1$-monochromatic vertex $u \in V_1$ with two $1$-monochromatic neighbours $v_1,v_2 \in V_2$. Note furthermore that $v_1$ and $v_2$ have degree~$1$ in $C$ (as otherwise Step~1 above could have been applied once more). We now modify the labelling using the following analogue of Claim~\ref{claim:bipartite} (we omit a proof, as it would go along the exact same lines):

\begin{claim}\label{claim:bipartite2}
For every vertex $v$ in any part $V_i \in \{V_1,V_2\}$ of $C$, we can relabel the edges of $C$ with $1$ and $3$ so that $d_3(u)$ is odd for every $u \in V_i \setminus \{v\}$, and $d_3(u)$ is even for every $u \in V_{3-i}$. Similarly, we can relabel the edges of $C$ with $1$ and $3$ so that $d_3(u)$ is even for every $u \in V_i \setminus \{v\}$, and $d_3(u)$ is odd for every $u \in V_{3-i}$.
\end{claim}

We now use Claim~\ref{claim:bipartite2} as follows:

\begin{itemize}
    \item If $|V(C) \cap V_2|$ is even, then, by Claim~\ref{claim:bipartite2}, we can relabel the edges of $C$ with~$1$ and~$3$ so that all vertices in $V(C) \cap V_1$ have even $3$-degree, while all vertices in $V(C) \cap V_2$ have odd $3$-degree.
    
    \item If $|V(C) \cap V_2|$ is odd, then the same conclusion can be achieved in the subgraph $C-v_2$, since $|V(C-v_2) \cap V_2|$ is even.
\end{itemize}

By this modification, note that any two adjacent $3$-monochromatic vertices of $C$ have their $3$-degrees being of distinct parity, and are thus not in conflict. Recall in particular that, in earlier Step~2 above, all $3$-monochromatic vertices created in $V_1$ have even $3$-degree exactly~$2$ while all $3$-monochromatic vertices created in $V_2$ have odd $3$-degree exactly~$1$. Similarly, in earlier Step~1 above, every created $3$-monochromatic vertex was created in $V_2$ and has odd $3$-degree exactly~$1$. Thus, no conflict can involve $3$-monochromatic vertices.

Finally, we note that, in the process above, the only possible remaining conflict is actually in the second case, along the edge $uv_2$ since $v_2$ remains $1$-monochromatic while $u$ might have $3$-degree~$0$, and thus be $1$-monochromatic as well.
\end{proof}

To finish off this section, let us mention that the labelling scheme developed in the proof of Theorem~\ref{theorem:star} has another implication for the product version of the \textbf{1-2 Conjecture}, raised by Przyby{\l}o and Wo\'zniak in~\cite{PW10}. The 1-2 Conjecture asks whether every graph has an s-proper \textit{$2$-total-labelling} $\ell$, i.e., a $2$-total-labelling (assigning labels to edges and vertices) so that $\sigma_\ell^t(u) \neq \sigma_\ell^t(v)$ for every edge $uv$, where $\sigma_\ell^t(w)=\sigma_\ell(w)+\ell(w)$ for every vertex $w$. In other words, in this type of labelling we are also allowed to locally alter vertex sums (via vertex labels) without spoiling neighbouring ones.

In~\cite{SK12}, Skowronek-Kazi\'ow also introduced and studied the product version of the 1-2 Conjecture. By adapting existing proofs for the sum version of the 1-2 Conjecture, she mainly proved that every graph admits a p-proper total-labelling assigning labels in $\{1,2,3\}$ to the edges and labels in $\{1,2\}$ to the vertices. By modifying the last stage of our labelling scheme in the proof of Theorem~\ref{theorem:star}, we get another proof of that result.

\begin{theorem}
Every graph $G$ admits a p-proper total-labelling assigning labels in $\{1,2,3\}$ to the edges and labels in $\{1,2\}$ to the vertices.
\end{theorem}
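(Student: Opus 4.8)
The plan is to reuse the edge-labelling scheme from the proof of Theorem~\ref{theorem:star}, and to delegate the resolution of its last conflicts to the vertex labels rather than to the delicate endgame of that proof. Throughout, write $\rho_\ell^t(w)=\rho_\ell(w)\cdot\ell(w)$ for the total product at a vertex $w$. First I would reduce to $G$ connected, handling components independently. If $G=K_2$, assign label~$1$ to its edge and vertex labels~$1$ and~$2$ to its ends, so that the total products are~$1$ and~$2$. If $G$ is $3$-colourable or $4$-chromatic, take a p-proper $3$-labelling of $G$ (by~\cite{SK12} or Theorem~\ref{theorem:4chromatic} respectively) and give label~$1$ to every vertex; then $\rho_\ell^t=\rho_\ell$ everywhere, so the total-labelling is p-proper. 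Hence I may assume $k=\chi(G)\geq 5$ and fix a proper $k$-colouring $(V_1,\dots,V_k)$ as in that proof.

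Next I would run \emph{only} the first modification phase of the proof of Theorem~\ref{theorem:star} on the edges of $G$ (labelling the upward edges of $V_k,\dots,V_3$), keeping all vertex labels equal to~$1$. As shown there, after this phase each vertex of $V_3\cup\dots\cup V_k$ is bichromatic with $2$- and $3$-degrees that prevent conflicts among those parts and with $V_1\cup V_2$; each vertex of $V_1$ is $1$-monochromatic or $2$-monochromatic; each vertex of $V_2$ is $1$-monochromatic or $3$-monochromatic; and the only remaining conflicts are along edges $uv$ with $u\in V_1$, $v\in V_2$, both $1$-monochromatic, so that currently $\rho_\ell^t(u)=\rho_\ell^t(v)=1$.

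To finish, I would change the vertex label of every $1$-monochromatic vertex of $V_1$ from~$1$ to~$2$, leaving all other vertex labels equal to~$1$. This resolves each conflicting pair $uv$ above, since now $\rho_\ell^t(u)=2\neq 1=\rho_\ell^t(v)$. The point requiring care is that no new conflict is created. After the change, the vertices $w$ with $\rho_\ell^t(w)=1$ are exactly the $1$-monochromatic vertices of $V_2$ (every other $1$-monochromatic vertex now carries vertex label~$2$, and $V_3\cup\dots\cup V_k$ has no $1$-monochromatic vertex), and these lie in the independent set $V_2$; the vertices $w$ with $\rho_\ell^t(w)=2$ are exactly the $1$-monochromatic vertices of $V_1$ together with the $2$-monochromatic vertices of $V_1$ of $2$-degree~$1$, and these lie in the independent set $V_1$. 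No vertex outside $V_1\cup V_2$ can have total product in $\{1,2\}$, since bichromatic vertices keep a factor~$3$ in $\rho_\ell^t$. Hence no two adjacent vertices share a total product, and the resulting total-labelling uses edge labels in $\{1,2,3\}$ and vertex labels in $\{1,2\}$.

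The only genuinely nontrivial step is this last verification, and its content is exactly the observation that every vertex whose total product lies in $\{1,2\}$ sits in one of the two independent sets $V_1$, $V_2$; this also explains why the perturbation must be pushed onto the $V_1$ side — pushing it onto $V_2$ would instead create total product~$2$ there, clashing with $2$-monochromatic $V_1$-neighbours of $2$-degree~$1$.
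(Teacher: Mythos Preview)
Your proof is correct and follows essentially the same route as the paper: run the first modification phase from Theorem~\ref{theorem:star} so that $V_1$ is $\{1,2\}$-monochromatic, $V_2$ is $\{1,3\}$-monochromatic, and $V_3\cup\dots\cup V_k$ is bichromatic, then repair the remaining $1$-monochromatic conflicts via vertex labels in $\{1,2\}$. The only cosmetic difference is that the paper simply sets the vertex label of \emph{every} vertex of $V_1$ to~$2$ (so all of $V_1$ has total product a positive power of~$2$), whereas you relabel only the $1$-monochromatic vertices of $V_1$; both choices keep $S_1\subseteq V_2$ and $S_2\subseteq V_1$, so either variant works.
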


\begin{proof}
Mimic the proof of Theorem~\ref{theorem:star}, until the last stage, i.e., to the point where vertices in $V_1$ are $1$-monochromatic or $2$-monochromatic, while the vertices in $V_2$ are $1$-monochromatic or $3$-monochromatic. Assign label~$1$ to all vertices, so that the products are not altered. To get a total-labelling as desired, we get rid of all remaining conflicts by just making sure that all vertices of $V_1$ become $2$-monochromatic. To that end, we simply change to~$2$ the label of every vertex in $V_1$.
\end{proof}

\section{A conjecture for $2$-labellings with restricted product conflicts} \label{section:weak-conjecture}

From a more general perspective, according to Theorem~\ref{theorem:star}, for every graph $G$ we can design a $3$-labelling such that $G[S_x]$ is a forest for every $x \geq 1$. One can naturally wonder whether $2$-labellings are powerful enough to achieve the same goal. As we did not manage to come up with any obvious reason why this could be wrong, we raise:

\begin{conjecture}\label{conjecture:weak}
Every graph $G$ can be $2$-labelled so that $G[S_x]$ is a forest for every $x \geq 1$.
\end{conjecture}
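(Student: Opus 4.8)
The plan is to mimic the proof scheme of Theorem~\ref{theorem:star} as closely as possible, only restricting labels to $\{1,2\}$ and correspondingly weakening the target vertex properties: instead of trying to make the $S_x$'s independent, we aim for each $G[S_x]$ to be a forest. First I would dispose of the small cases: if $G$ is $K_2$, label its edge $1$; more generally, we may assume $G$ is connected. The key structural setup is the same: take a proper $\chi(G)$-vertex-colouring $(V_1,\dots,V_k)$ and, by the same "push vertices up" argument, assume every $v\in V_i$ has upward edges to each of $V_1,\dots,V_{i-1}$.

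Next, with only labels $1$ and $2$ available, the notion of "bichromatic" disappears, so the distinguishing mechanism must be entirely through $2$-degrees. The natural analogue of the first phase of Theorem~\ref{theorem:star} is to process $V_k,\dots,V_3$ from bottom to top and label upward edges so that every vertex of $V_i$ (for $i\ge 3$) gets a $2$-degree that is a prescribed function of $i$ (say, roughly $\lceil i/2\rceil$ or some injective-enough assignment), which one can always arrange since $v$ has an upward edge to each lower part, giving enough "slack" to tune the $2$-degree. Since $\rho_\ell(v)=2^{d_2(v)}$, this forces $S_x$ for $x=2^m$, $m\ge 1$, to only contain vertices with a controlled part-index, so $G[S_{2^m}]$ breaks up into pieces that are each contained in a controlled number of consecutive parts; with care one shows each such induced subgraph is a forest (indeed one could aim for it to be edgeless except across one or two parts, where it will be a union of stars, hence a forest). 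The leftover case, as always, is $S_1$: vertices with $2$-degree $0$, which after the first phase are exactly certain vertices of $V_1$ and $V_2$, joined by edges all labelled $1$.

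The heart of the argument is then handling $G[S_1]$ restricted to $V_1\cup V_2$, where we only have the single nontrivial label $2$ to break things up, and the goal is weaker than before (a forest, not an independent set). Here I would run an analogue of Step 1 / Step 2 of Theorem~\ref{theorem:star}: in each component $H$ of the graph induced by the current $1$-monochromatic vertices, repeatedly relabel edges $1\leftrightarrow 2$ along paths (the analogue of Claim~\ref{claim:bipartite}) to push $2$-degrees to prescribed parities, pulling vertices out of $S_1$ and into $S_{2^m}$'s in a way that keeps those sets forest-inducing. Whatever genuinely cannot be fixed should be confined to a sparse leftover — concretely, one wants to argue that the edges of $G$ both of whose endpoints remain in $S_1$ form a forest. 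A clean way to see this: any cycle inside $G[S_1]$ must be an even cycle lying in $V_1\cup V_2$ with all edges labelled $1$, and by flipping the labels along such a cycle (which changes every vertex's $2$-degree by an even amount, hence keeps the endpoints' parities but removes the all-$1$ structure) one can destroy it without creating new conflicts elsewhere; iterating kills all cycles, leaving $G[S_1]$ a forest.

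The main obstacle I expect is precisely this last point: with only two labels, the "escape moves" that in Theorem~\ref{theorem:star} used label $3$ (turning a vertex $3$-monochromatic, which is automatically safe because it has a fresh $3$-degree) are unavailable, so one must verify that the parity-flipping moves on $V_1\cup V_2$ never create a cycle in some $G[S_{2^m}]$ — i.e., that pulling a vertex out of $S_1$ lands it somewhere harmless. Controlling this interaction between the "upper" parts and the $V_1\cup V_2$ repair phase, and showing the whole process terminates, is the delicate part; everything else is a fairly routine transcription of the Theorem~\ref{theorem:star} machinery with $3$ replaced by $2$ and "independent" replaced by "forest".
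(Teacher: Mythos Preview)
The statement you are trying to prove is a \emph{conjecture} in the paper, not a theorem: the paper does not give a proof of it in general. It is explicitly raised as an open problem, and the remainder of Section~\ref{section:weak-conjecture} only verifies it for three special classes (complete graphs, bipartite graphs, subcubic graphs), by ad hoc arguments quite different from the Theorem~\ref{theorem:star} machinery. So there is no ``paper's own proof'' to compare against, and any correct general argument would be a new result.

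As for your sketch itself, there are two concrete gaps. First, the Phase~1 claim that one can give every vertex of $V_i$ a $2$-degree that is a prescribed function of $i$ does not go through as in Theorem~\ref{theorem:star}. In that proof the trick is that a vertex in an odd part has all downward edges labelled~$3$ or~$1$, so its $2$-degree comes \emph{only} from upward edges and can be set exactly; with labels $\{1,2\}$ there is no second label to absorb downward contributions, so a vertex in $V_i$ may already have arbitrarily large $2$-degree from its downward edges before you even look at its upward ones. You only control the upward part, and there are merely $\ge i-1$ upward edges, so you cannot force $d_2(v)$ to equal (or even lie near) $\lceil i/2\rceil$. Second, your cycle-killing move on $G[S_1]$ --- flipping all labels along an all-$1$ cycle in $V_1\cup V_2$ --- raises each cycle vertex's $2$-degree by exactly~$2$, ejecting it from $S_1$ into $S_4$; but you then need $G[S_4]$ to remain a forest, and nothing in the argument prevents these newly created $2$-degree-$2$ vertices from forming a cycle among themselves or with existing $S_4$ vertices in higher parts. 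The student correctly flags this interaction as ``the delicate part'', but it is precisely the part that is missing, and it is the reason the statement is still a conjecture.
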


It is worth noting that Conjecture~\ref{conjecture:weak} matches a similar conjecture raised in~\cite{GWW15} by Gao, Wang and Wu in the sum context. They notably proved that the sum version of Conjecture~\ref{conjecture:weak} holds for graphs with maximum average degree at most~$3$ and series-parallel graphs. In what follows, as support, we prove Conjecture~\ref{conjecture:weak} (sometimes in an actually stronger form) for three classes of graphs: complete graphs, bipartite graphs, and subcubic graphs.

\begin{theorem}
Every complete graph $K_n$ admits a $2$-labelling such that one of the $S_x$'s induces an edge,
while all other $S_x$'s are independent sets.
\end{theorem}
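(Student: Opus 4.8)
The plan is to exhibit an explicit $2$-labelling of $K_n$ and argue that at most one pair of vertices collides, and that such a pair is unique (so $S_x$ for that value of $x$ is a single edge, and every other $S_x$ is a single vertex, hence trivially independent). First I would dispose of the tiny cases ($n \leq 2$, where $K_2$ gets label $1$; and small $n$ by hand if needed) and then assume $n$ is reasonably large. The key observation is that under a $2$-labelling the product at a vertex $v$ is $\rho_\ell(v) = 2^{d_2(v)}$, so $\rho_\ell(u) = \rho_\ell(v)$ if and only if $d_2(u) = d_2(v)$. Thus the whole problem reduces to choosing the set of $2$-labelled edges so that the resulting ``$2$-degree sequence'' $(d_2(v))_{v \in V(K_n)}$ has all values distinct, with at most one repeated value, and that value attained by exactly two vertices joined by an edge (automatic in $K_n$). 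So I want a subgraph $F \subseteq K_n$ (the edges labelled $2$) whose degree sequence is $0,1,2,\dots,n-1$ up to one unavoidable repetition.

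The obstruction is parity/counting: a graph on $n$ vertices cannot have all degrees distinct (the degrees would have to be $0,1,\dots,n-1$, but a vertex of degree $0$ and one of degree $n-1$ cannot coexist). So the best one can hope for is exactly one repeated value, shared by two vertices — which is exactly what the statement asks. I would build $F$ as a ``threshold-type'' graph: order the vertices $v_1,\dots,v_n$ and let $v_i \sim_F v_j$ (for $i<j$) exactly when $i + j > n$ (or some similar threshold rule). A short computation gives $d_F(v_i) = $ a strictly increasing-then-pinned sequence; concretely one gets degrees $0,1,1,2,3,\dots$ or $0,1,2,\dots$ with a single collision between two consecutive vertices in the ordering — and those two vertices are adjacent in $K_n$, as required. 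I would verify the exact degree sequence of the chosen threshold graph by the standard inductive peeling argument (remove the vertex of degree $0$ and the vertex of degree $n-1$ simultaneously, which reduces $K_n$-with-$F$ to $K_{n-2}$-with-$F'$ of the same type), pinning down precisely which value is duplicated.

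With $F$ fixed, I would label every edge of $F$ with $2$ and every other edge of $K_n$ with $1$. Then $\rho_\ell(v_i) = 2^{d_F(v_i)}$, all these values are distinct except for the single pair $\{v_a, v_{a+1}\}$ (say) with $d_F(v_a) = d_F(v_{a+1})$; that pair is an edge of $K_n$, so the set $S_{2^{d_F(v_a)}}$ is exactly an edge, and every other $S_x$ is a single vertex. Hence the labelling has the claimed property. The only real work is the bookkeeping in the previous paragraph — choosing the threshold rule so that the degree sequence genuinely collides in exactly one spot and identifying that spot — together with double-checking the handful of smallest $n$ where the peeling recursion bottoms out; I expect the small-case check and getting the threshold constant exactly right to be the fussiest part, but there is no conceptual difficulty.
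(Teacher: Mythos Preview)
Your approach is correct. The paper instead runs a one-vertex-at-a-time induction with a strengthened hypothesis (either no vertex is incident only to $1$'s, or none only to $2$'s): given such a labelling of $K_{n-1}$, attach $v_n$ with all its edges labelled $1$ or all labelled $2$, chosen so that $v_n$ lands at an unoccupied extreme $2$-degree; all old $2$-degrees shift by the same amount, so the single old collision survives and no new one appears. Unwound, the paper is also building a threshold graph (repeatedly adding an isolated or a dominating vertex to the $2$-edge subgraph), so the two constructions coincide in spirit. Your explicit rule $i+j>n$ is more direct, since the degree sequence $1,2,\dots,\lfloor n/2\rfloor,\lfloor n/2\rfloor,\dots,n-1$ can be read off in one line with no induction; the paper's version trades that computation for a clean inductive step. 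One small slip in your sketch: you cannot simultaneously peel a degree-$0$ and a degree-$(n-1)$ vertex, since those never coexist---that is precisely the obstruction you already identified; under your rule the extremes are degrees $1$ and $n-1$, and peeling those does reduce to the same rule on $K_{n-2}$, so the recursion still goes through.
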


\begin{proof}
We give an iterative labelling scheme which, starting from $K_2$, yields a desired $2$-labelling for larger and larger complete graphs $K_n$. To that end, we need a stronger hypothesis, namely that for every complete graph $K_n$ there is a desired $2$-labelling with the additional requirement that either there is no vertex incident only to edges labelled~$1$, or there is no vertex incident only to edges labelled~$2$.

This is true for $K_2$: by assigning label~$1$ to the only edge, we get a $2$-labelling where $S_1$ induces an edge (while there are no other $S_x$'s) and no vertex is incident only to edges labelled~$2$. Assume now our stronger claim is true for $K_{n-1}$ for some $n \geq 3$, and consider a $2$-labelling of $K_{n-1}$, with vertex set $\{v_1,\dots,v_{n-1}\}$, obtained by induction (thus with the desired properties). Let us extend this labelling to the incident edges of a newly-added vertex $v_n$ joined to all vertices in $\{v_1,\dots,v_{n-1}\}$, by assigning label~$1$ to all edges incident to $v_n$ if no vertex in $\{v_1,\dots,v_{n-1}\}$ is incident only to edges labelled~$1$, or by assigning label~$2$ to all edges incident to $v_n$ if no vertex in $\{v_1,\dots,v_{n-1}\}$ is incident only to edges labelled~$2$. Note that the $2$-degree of all vertices in $\{v_1,\dots,v_{n-1}\}$ grows by the same amount, either~$0$ or~$1$. Thus, no new conflict involving two vertices in $\{v_1,\dots,v_{n-1}\}$ arises. Now, regarding $v_n$, its $2$-degree is either the smallest possible ($0$) or the largest possible ($n-1$) for a vertex with degree $n-1$. By our choice of making $v_n$ incident to either only edges assigned label~$1$ or only edges assigned label~$2$, we deduce that $v_n$ cannot be involved in a conflict. Thus, there remains only one conflict, and there is either no vertex in $\{v_1,\dots,v_n\}$ incident only to edges labelled~$2$, or no vertex in $\{v_1,\dots,v_n\}$ incident only to edges labelled~$1$. This concludes the proof.
\end{proof}

\begin{theorem}
Every connected bipartite graph $G$ admits a $2$-labelling such that one of the $S_x$'s induces at most one star and isolated vertices,
while all other $S_x$'s are independent sets.
\end{theorem}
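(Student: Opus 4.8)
The plan is to mimic the now-familiar strategy: exploit a proper $2$-colouring $(A, B)$ of the connected bipartite graph $G$, and use the fact that under a $2$-labelling the product $\rho_\ell(v)$ is entirely determined by the $2$-degree $d_2(v)$ (since label~$1$ does not contribute). Thus two adjacent vertices are in conflict if and only if they have the same $2$-degree; in particular a vertex of $A$ and a vertex of $B$ conflict iff their $2$-degrees coincide. The goal is therefore to assign $2$-degrees to vertices so that, across every edge, the endpoints get different $2$-degrees, except for controlled exceptions that form a single star (the $S_1$ exceptions, i.e.\ vertices with $\rho_\ell = 1$, which means $2$-degree~$0$) plus isolated vertices.

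First I would set up the labelling so that $B$-vertices receive $2$-degree~$0$ as often as possible (label all their incident edges with~$1$) — this puts them into $S_1$ — and then, for each vertex $u \in A$, I would like its $2$-degree to be a positive even number, or at least nonzero, so it escapes $S_1$ and cannot conflict with the $B$-vertices it dominates. Concretely, I would orient things around the observation used in Claim~\ref{claim:bipartite}-style parity switching: along any path we can switch labels between $1$ and $2$ to toggle the parity of the $2$-degree at the two endpoints while preserving it everywhere internally. Using a spanning tree of $G$ (connected!), I would push ``parity defects'' around so that every vertex of $A$ ends with $2$-degree at least~$1$ and every vertex of $B$ ends with $2$-degree~$0$, leaving at most one vertex of $G$ as an unavoidable exception (the root of the tree, whose parity we cannot fix) — this is exactly the one-exception phenomenon seen in the earlier proofs. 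The at-most-one exceptional vertex, together with possibly a few of its neighbours that still sit in $S_1$, will induce at most one star (centred at that vertex), and all remaining $S_x$ with $x \geq 2$ are independent because they live inside the independent set $A$ (distinct $2$-degrees give distinct products, and equal $2$-degrees inside $A$ are never adjacent).

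The main technical point to get right is the bipartiteness/degree-sum bookkeeping: one cannot in general make \emph{every} vertex of $A$ have positive $2$-degree simply because, e.g., an $A$-vertex might have degree so small or the tree structure so constrained that forcing it nonzero forces a $B$-vertex nonzero too. So the careful version is to first handle the generic vertices by the path-switching argument on the spanning tree, processing leaves upward, and only at the very end accept that the root of the tree (or a single vertex near it) may fail its parity target; I would then argue that this single failure, plus its neighbours that are still all-$1$, induces at most one star and isolated vertices, which is precisely the claimed form. The hard part will be verifying that no \emph{new} conflict is created when we resolve the parity of a vertex $u \in A$ by flipping a tree-path edge incident to a $B$-vertex $w$: this changes $d_2(w)$ and could, a priori, make $w$ collide with some other $A$-neighbour; the resolution is to perform the switches in a fixed leaf-to-root order on the spanning tree and to use the fact that $B$-vertices are being driven toward $2$-degree~$0$ while $A$-vertices are driven toward positive $2$-degree, so the two classes stay separated by the value~$0$ throughout, and within each class adjacency is forbidden. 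Handling disconnected $S_x$ components and confirming the ``at most one star'' shape will require a short case analysis but no heavy computation.
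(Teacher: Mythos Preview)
Your proposal contains a genuine gap: the target labelling you aim for is impossible. You want every $B$-vertex to have $2$-degree~$0$ while every $A$-vertex has $2$-degree at least~$1$ (with at most one exception). But each edge labelled~$2$ contributes exactly once to an $A$-vertex and once to a $B$-vertex, so $\sum_{a\in A} d_2(a)=\sum_{b\in B} d_2(b)$; if the right-hand side is~$0$ then so is the left-hand side, forcing every $A$-vertex to have $2$-degree~$0$ as well. Allowing a single exceptional vertex does not rescue this: one exceptional $B$-vertex would need $2$-degree at least $|A|-1$, which is not available in general. The phrase ``the two classes stay separated by the value~$0$'' shows you really are aiming at $d_2=0$ on all of $B$, not merely at a parity condition, and this cannot be achieved.

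The fix, and what the paper does, is to replace ``$2$-degree~$0$'' by ``even $2$-degree'' and ``$2$-degree $\ge 1$'' by ``odd $2$-degree''. Then the global obstruction reduces to a single parity constraint, and one exceptional vertex genuinely suffices; adjacent vertices in different parity classes can never share a product, so all conflicts are confined to the star around the exceptional vertex. The paper implements this via BFS layers from a root $v^*$: vertices in odd layers get odd $2$-degree, vertices in even layers (other than $v^*$) get even $2$-degree, and this is arranged by processing layers from the deepest up, each vertex choosing the labels on its (at least one) upward edges to fix its own parity. This is simpler than your spanning-tree path-switching, because each vertex can set its parity locally using its parent edge(s), with no need to propagate defects along paths. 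Note also that the resulting star need not sit in $S_1$; the exceptional vertex $v^*$ may have any $2$-degree, and the statement only claims ``one of the $S_x$'s''.
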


\begin{proof}
Let $v^*$ be any vertex of $G$. From $v^*$, we get a partition $V_0 \cup \dots \cup V_d$ of $V(G)$ where each $V_i$ contains the vertices at distance $i$ from $v^*$. Note that $V_0=\{v^*\}$. Since $G$ is bipartite, none of the $V_i$'s contains an edge. Furthermore, for every edge $uv$ we have $u \in V_i$ and $v \in V_{i+1}$ for some $i$. A part $V_i$ is said \textit{even} if $i$ is even, while $V_i$ is said \textit{odd} otherwise.

We produce a $2$-labelling $\ell$ of $G$ where every vertex in an even $V_i$ different from $V_0$ has even $2$-degree, while every vertex in an odd $V_i$ has odd $2$-degree. Note that the existence of  $\ell$ proves the claim, since, by such a labelling, $v^*$ is the only vertex from an even $V_i$ that can be involved in conflicts. In particular, if one of the $S_x$'s induces a graph containing a star, then that star must be centered at $v^*$.

We consider the vertices of $G$ different from $v^*$ successively, starting from those in the deepest $V_i$'s, and, as going up, finishing with those in $V_1$. In the course of this process, let us consider $v \in V_i$, a vertex in some $V_i$ all of whose incident edges going to $V_{i+1}$ (if any) have been labelled. By definition of the $V_i$'s, there is at least one edge incident to $v$ going to $V_{i-1}$. We assign label~$2$ to every such edge going to $V_{i-1}$, but maybe to one of them (to which we instead assign label~$1$) so that the $2$-degree of $v$ is of the desired parity.

Once all vertices different from $v^*$ have been treated that way, we end up with $\ell$ having the desired properties.
\end{proof}

\begin{theorem}
Every subcubic graph $G$ admits a $2$-labelling such that all $S_x$'s induce a forest.
\end{theorem}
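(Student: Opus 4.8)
The plan is to label the edges of a subcubic graph $G$ with $\{1,2\}$ so that, for every $x \geq 1$, the set $S_x$ of vertices with product $x$ induces a forest. Since the only achievable products by a $\{1,2\}$-labelling at a vertex of degree at most $3$ are $1, 2, 4, 8$, we only have four colour classes to control: $S_1$ (vertices seeing only label~$1$), $S_2$ ($2$-degree exactly~$1$), $S_4$ ($2$-degree exactly~$2$), and $S_8$ ($2$-degree exactly~$3$, so all three incident edges labelled~$2$). Note that a conflict between adjacent $u,v$ occurs precisely when they have the same $2$-degree, and a vertex of degree $1$ or $2$ can never lie in $S_8$, while a vertex of degree $1$ can only lie in $S_1$ or $S_2$. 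So the real danger zones are cycles contained in $S_x$ for a single $x$.

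First I would reduce to the $2$-edge-connected case, or more precisely handle the block structure: it suffices to produce the labelling independently on each block (maximal $2$-connected subgraph or bridge) if we can do so while additionally controlling something local at cut vertices — but since a forest constraint is closed under gluing along cut vertices (a cycle lives inside a single block), it in fact suffices to label each block so that $G[S_x]$ restricted to that block is a forest; a cycle of $G$ lies in one block, so the union is still a forest. Thus I may assume $G$ is $2$-connected and subcubic. The natural tool is an Eulerian-type or $T$-join argument: a connected graph with all even degrees (or after adding a matching on the odd-degree vertices) decomposes, and one can try to orient/pair up edges. Concretely, I would look for a spanning structure — a spanning tree $T$, or a suitable even subgraph — and label edges of one part $1$ and the other $2$, then locally fix parities.

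The cleanest approach I expect to work: take any spanning tree $T$ of $G$ and root it. Process vertices from the leaves upward; when processing $v$ with parent edge $e_v$, all non-tree edges at $v$ and all child tree-edges are already labelled, so we may set $\ell(e_v) \in \{1,2\}$ to force the $2$-degree of $v$ into a prescribed parity — this is exactly the mechanism used in the bipartite case above. Choosing parities so that adjacent vertices are pushed into different $2$-degree-parity classes whenever possible would be ideal, but $G$ need not be bipartite, so some monochromatic edges in some $S_x$ are unavoidable; the point is to show the monochromatic subgraph of each $S_x$ is acyclic. Here the subcubic hypothesis is crucial: if $v$ has degree $3$ and both its non-root incident edges are already labelled, then $v$'s final $2$-degree is determined up to the single choice at $e_v$, so $v$ can be steered into one of two consecutive values; with degree $3$ that is a lot of control. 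The root is the only vertex whose parity we cannot choose, so at most one "bad" vertex remains, and a single vertex induces no cycle in any $S_x$.

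The main obstacle I anticipate is not the leaves-upward labelling itself but proving the acyclicity globally: ensuring that no cycle of $G$ ends up entirely inside one $S_x$. A cycle in $S_4$, say, would be a cycle all of whose vertices have $2$-degree exactly $2$; along such a cycle the two cycle-edges at each vertex plus possibly a chord/pendant edge interact, and one must argue the parity-steering can always be arranged to break it — likely by choosing, for each non-bipartite "obstruction" (odd cycle or an unavoidable clash), to sacrifice the parity condition at exactly one vertex of that cycle, turning the cycle into a path in $S_x$. I would formalize this by considering, for the fixed target parities, the subgraph $M$ of "monochromatic-risk" edges and showing a feasible labelling exists making $M \cap G[S_x]$ a forest for each $x$ simultaneously; a discharging or direct tree-by-tree argument on $T$ should close it, using that each vertex has at most $3$ incident edges and hence at most $2$ that are not its parent edge. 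If a fully clean argument along these lines proves elusive, the fallback is to invoke a decomposition of subcubic graphs into a linear forest plus a matching (or into at most two linear forests, by Vizing/edge-colouring with $4$ colours reorganized) and label each part with a single value, checking the four resulting classes directly.
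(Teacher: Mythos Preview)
Your proposal is a sketch of several approaches rather than a proof, and none is carried through. The block reduction is not valid as stated: the product at a cut vertex depends on edges in \emph{all} incident blocks, so labelling blocks independently does not determine the $S_x$'s locally; a cycle in $G[S_x]$ does lie in a single block $B$, but whether its vertices lie in $S_x$ also depends on labels of bridge edges outside $B$. More seriously, the spanning-tree-with-parity idea does not close. Controlling the parity of the $2$-degree at every non-root vertex prevents conflicts only in the bipartite case (where it would in fact yield a p-proper labelling). In a non-bipartite subcubic graph you will necessarily have adjacent vertices of the same target parity, and two such vertices can still have the \emph{same} $2$-degree (both $1$, or both $2$); nothing rules out a whole cycle of such vertices. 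Your suggestion to ``sacrifice the parity condition at exactly one vertex of each bad cycle'' is precisely the missing argument: monochromatic cycles can overlap, so breaking one may fail to break, or may create, another, and you have not shown a consistent choice exists. The fallback is also unchecked: labelling a matching $2$ and a linear forest $1$ gives every vertex $2$-degree $0$ or $1$, and $G[S_1]$ (the unmatched vertices) can certainly contain cycles.

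The paper's proof is entirely different and much shorter: a direct induction on $|V(G)|+|E(G)|$, deleting a vertex $u$ of minimum degree, applying the hypothesis to $G-u$, and extending to the at most three edges at $u$ by trying a small list of label patterns. The subcubic bound makes the case analysis finite: if assigning all $1$'s at $u$ closes a cycle in some $S_x$, one argues (using that each neighbour of $u$ has at most two other neighbours) that switching one or two of these labels to $2$ breaks that cycle without creating a new one. Three cases ($\delta(G)=1,2,3$) and at most three label assignments per case suffice.
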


\begin{proof}
We prove the claim by induction on $|V(G)|+|E(G)|$. As the claim can easily be proved when $G$ is small, we focus on proving the general case, which we do by considering the possible cases for the minimum degree $\delta(G)$ of $G$.

\begin{itemize}
    \item First assume $\delta(G)=1$, and let $v$ be a degree-$1$ vertex of $G$. Let us consider $G'=G-v$. By the induction hypothesis, there is a $2$-labelling of $G'$ which is as desired. We extend this labelling to $G$ by assigning label~$1$ to the edge incident to $v$. This way, note that the resulting labelling is as desired, since $G'[S_1]$ gets added a pending or isolated vertex.
    
    \item Next assume $\delta(G)=2$, and let $u$ be a degree-$2$ vertex with neighbours $v$ and $w$ of degree at least~$2$. We here consider $G'=G-u$, which has a $2$-labelling with the desired properties. Let us first try to extend this labelling to $G$ by assigning label~$1$ to $uv$ and $uw$. Note that if the desired properties are not met, then it must be because $G'[S_1]$ has a path from $v$ to $w$. In particular, both $v$ and $w$ have product~$1$, and each of these two vertices is adjacent, in $G'[S_1]$, to another vertex. Then, assign label~$2$ to $uv$ and label~$1$ to $uw$. Now the resulting labelling of $G$ must be as desired, since this removed $v$ from $G'[S_1]$, and added a pending or isolated path of length~$2$ to $G'[S_2]$. This is because $G$ is subcubic, which, at this point, implies that $v$ has at most one neighbour in $S_2$.
    
    \item Lastly assume $\delta(G)=3$, i.e., $G$ is cubic, and consider $u$ a degree-$3$ vertex with neighbours $v,w,x$ of degree~$3$. We consider $G'=G-u$, which, again, has a $2$-labelling with the desired properties. If we do not obtain a desired labelling of $G$ when assigning label~$1$ to $uv$, $uw$ and $ux$, then it must be because, say, $v$ and $w$ have product~$1$ and are joined by a path in $G'[S_1]$. By arguments above, due to the bounded maximum degree of $G$, if we do not obtain a desired labelling when assigning label~$2$ to $uv$ and label~$1$ to $uw$ and $ux$, then this must be because $x$ has product~$2$, and $G'[S_2]$ contains a path from $x$ to a neighbour of $v$. Then we deduce that, by the labelling of $G'$, in $G'$ the two remaining neighbours of $v$ have product~$1$ and~$2$, and $x$ has a neighbour with product~$2$. Then note that we are done when assigning label~$2$ to $uv$ and $ux$, and label~$1$ to $uw$. Indeed, this removes $v$ from $G'[S_1]$ and $x$ from $G'[S_2]$, adds to $G'[S_2]$ a pending or isolated edge (attached to $v$), and adds to $G'[S_4]$ a pending  or isolated path of length~$2$ (attached to $u$).
\end{itemize}

This concludes the proof.
\end{proof}

%%%%%%%%%%%%%%%%%%%%%%%%%%%%%%%%%%%%%%%%%%%%%%%%%%%%%%%%%%%%%%%%%%%%
%%%%%%%%%%%%%%%%%%%%%%%%%%%%%%%%%%%%%%%%%%%%%%%%%%%%%%%%%%%%%%%%%%%%
%%%%%%%%%%%%%%%%%%%%%%%%%%%%%%%%%%%%%%%%%%%%%%%%%%%%%%%%%%%%%%%%%%%%
%%%%%%%%%%%%%%%%%%%%%%%%%%%%%%%%%%%%%%%%%%%%%%%%%%%%%%%%%%%%%%%%%%%%
%%%%%%%%%%%%%%%%%%%%%%%%%%%%%%%%%%%%%%%%%%%%%%%%%%%%%%%%%%%%%%%%%%%%
%%%%%%%%%%%%%%%%%%%%%%%%%%%%%%%%%%%%%%%%%%%%%%%%%%%%%%%%%%%%%%%%%%%%
%%%%%%%%%%%%%%%%%%%%%%%%%%%%%%%%%%%%%%%%%%%%%%%%%%%%%%%%%%%%%%%%%%%%
%%%%%%%%%%%%%%%%%%%%%%%%%%%%%%%%%%%%%%%%%%%%%%%%%%%%%%%%%%%%%%%%%%%%
%%%%%%%%%%%%%%%%%%%%%%%%%%%%%%%%%%%%%%%%%%%%%%%%%%%%%%%%%%%%%%%%%%%%
%%%%%%%%%%%%%%%%%%%%%%%%%%%%%%%%%%%%%%%%%%%%%%%%%%%%%%%%%%%%%%%%%%%%
%%%%%%%%%%%%%%%%%%%%%%%%%%%%%%%%%%%%%%%%%%%%%%%%%%%%%%%%%%%%%%%%%%%%
%%%%%%%%%%%%%%%%%%%%%%%%%%%%%%%%%%%%%%%%%%%%%%%%%%%%%%%%%%%%%%%%%%%%
%%%%%%%%%%%%%%%%%%%%%%%%%%%%%%%%%%%%%%%%%%%%%%%%%%%%%%%%%%%%%%%%%%%%
%%%%%%%%%%%%%%%%%%%%%%%%%%%%%%%%%%%%%%%%%%%%%%%%%%%%%%%%%%%%%%%%%%%%
%%%%%%%%%%%%%%%%%%%%%%%%%%%%%%%%%%%%%%%%%%%%%%%%%%%%%%%%%%%%%%%%%%%%
%%%%%%%%%%%%%%%%%%%%%%%%%%%%%%%%%%%%%%%%%%%%%%%%%%%%%%%%%%%%%%%%%%%%
%%%%%%%%%%%%%%%%%%%%%%%%%%%%%%%%%%%%%%%%%%%%%%%%%%%%%%%%%%%%%%%%%%%%
%%%%%%%%%%%%%%%%%%%%%%%%%%%%%%%%%%%%%%%%%%%%%%%%%%%%%%%%%%%%%%%%%%%%
%%%%%%%%%%%%%%%%%%%%%%%%%%%%%%%%%%%%%%%%%%%%%%%%%%%%%%%%%%%%%%%%%%%%
%%%%%%%%%%%%%%%%%%%%%%%%%%%%%%%%%%%%%%%%%%%%%%%%%%%%%%%%%%%%%%%%%%%%

\end{document}